\documentclass[a4paper,onecolumn,nopdfoutputerror]{quantumarticle}

\usepackage{amsmath,amssymb,amsfonts,amsthm,bm}
\usepackage{graphicx}
\usepackage{quantikz}
\usepackage[numbers,sort&compress]{natbib}
\usepackage{hyperref}

\graphicspath{{./}}

\newtheorem{proposition}{Proposition}[section]
\newtheorem{lemma}[proposition]{Lemma}
\newtheorem{theorem}[proposition]{Theorem}
\newtheorem{corollary}[proposition]{Corollary}

\newcommand{\ii}{\mathrm{i}}
\newcommand{\su}{\mathfrak{su}}
\newcommand{\Lie}{\mathrm{Lie}}

\begin{document}

\title{Restricting Trainable Lie-Algebra Growth in Equivariant Quantum Networks via Hierarchical Ancilla-Controlled Subspace Projections}
\author{Ting Li}
\email{lit@njupt.edu.cn}
\affiliation{Nanjing University of Posts and Telecommunications, Nanjing, China}
\author{Zhiming Xiao}
\affiliation{Nanjing University of Posts and Telecommunications, Nanjing, China}
\author{Qibiao Tang}
\affiliation{Nanjing University of Posts and Telecommunications, Nanjing, China}

\begin{abstract}
Equivariant quantum networks encode symmetry as an inductive bias, which can improve generalization and may also favor optimization convergence. Equivariance alone, however, does not constrain the noncommuting closure of trainable generators, and this closure can still grow rapidly in symmetry-preserving variational circuits. We introduce a hierarchical ancilla-controlled architecture that addresses this Lie-algebra-growth mechanism. Commuting invariant-sector projectors on the data register select parameterized operations on a shared ancilla register, where the noncommuting trainable dynamics is confined. The trainable circuit decomposes into compatible joint sectors, giving a sector-probability-weighted ancilla response and an explicit view of parameter sharing across hierarchical paths. For an ancilla dimension $d_A=2^m$ and $K_\ell$ retained layer-wise control modes, we prove the group-independent bound $\dim(\mathfrak g)\le (d_A^2-1)\prod_{\ell=1}^{L}(K_\ell+1)$. The bound is polynomial in the number of data qubits when $m$ and $K_\ell$ remain constant along a logarithmic-depth hierarchy. Particle-number and parity projectors illustrate the general construction, while a fixed Clebsch--Gordan coupling tree supplies a concrete $SU(2)$ realization with rotation-invariant scalar outputs. Finite-size state-vector simulations exhibit slower gradient-variance decay and larger initialization gradients than generic and conventional rotationally equivariant circuits over the studied system sizes. The same realization fits sparse rotation-invariant classification tasks and geometry-dependent Heisenberg ground-state energies. These results support restricted trainable Lie-algebra growth as a structural strategy for initialization trainability in the regimes considered here.
\end{abstract}

\keywords{equivariant quantum networks, barren plateaus, dynamical Lie algebra, trainability, $SU(2)$ symmetry}

\maketitle

\section{Introduction}

Variational quantum circuits support a broad range of near-term algorithms in quantum simulation and quantum learning \cite{peruzzo_variational_2014,kandala_hardware_2017,preskill_quantum_2018,benedetti_parameterized_2019,cerezo_variational_2021,schuld_introduction_2015,biamonte_quantum_2017,havlicek_supervised_2019,schuld_circuit_2020}. Their performance depends strongly on circuit structure. Symmetry-aware constructions reduce the hypothesis space to functions compatible with geometric or physical priors, often improving generalization and parameter efficiency. Several studies also report favorable optimization behavior when equivariance removes redundant directions \cite{sauvage_group-invariant_2022,meyer_exploiting_2023,schatzki_theoretical_2024,west_provably_2024,gil-fuster_opportunities_2024}. These benefits make equivariant quantum networks a natural starting point for structured variational models.

Small gradients and loss concentration remain common as circuit size and expressivity increase \cite{mcclean2018barren,cerezo2021cost,anschuetz_quantum_2022,larocca_diagnosing_2022,grant_initialization_2019,thanasilp_subtleties_2023,larocca_review_2024}. Equivariance constrains how a circuit transforms under a group, yet this constraint does not determine the size of the dynamical Lie algebra generated by its trainable operators. A symmetry-preserving ansatz may therefore explore a large operator manifold and retain a barren-plateau-related concentration mechanism. Lie-algebraic analyses connect this algebra to controllability, overparameterization, quantum Fisher information rank, and gradient concentration \cite{larocca_theory_2023,fontana2023adjoint,goh_lie-algebraic_2025,holmes_connecting_2022,wiersema_classification_2024,ragone_lie_2024,diaz_showcasing_2023,gil-fuster_relation_2025}. The architectural problem considered here is how to preserve the useful symmetry bias while restricting the noncommuting trainable closure.

We introduce a hierarchical ancilla-controlled equivariant quantum network for this purpose. A family of commuting invariant-sector projectors acts on the data register and supplies coherent control conditions. Parameterized noncommuting evolution acts on a shared ancilla register of fixed dimension. Temporary label qubits extract the relevant sector predicates and are uncomputed after each controlled operation. Products of commuting projectors describe the data-side Lie closure, while the non-Abelian component remains localized to the ancilla subsystem. The same joint projectors decompose the full circuit into compatible paths: input-dependent sector probabilities are paired with ancilla responses whose parameters are shared across paths.

The algebraic construction applies to a general group whenever the control projectors admit a common projective refinement and commute with the group representation. We derive a group-independent containment theorem for the resulting dynamical Lie algebra and a dimension upper bound. Constant ancilla size and a constant number of retained modes per layer give polynomial growth along a logarithmic-depth hierarchy. We then construct an $SU(2)$ realization from a fixed Clebsch--Gordan coupling tree. The circuit evolution is rotation equivariant; an invariant initial state, a trivial ancilla representation, and ancilla-only measurements produce rotation-invariant scalar outputs. Rotationally equivariant architectures based on Fourier, Schur--Weyl, and spin-network structure provide the relevant context for this realization \cite{west_provably_2024,zheng_speeding_2023,east_all_2023,zheng_sncqa_2023}, while horizontal-gate constructions illustrate a complementary route to symmetry-informed expressivity \cite{wiersema_geometric_2024}.

The numerical study uses finite-size state-vector simulations under matched resource scales. It probes whether the algebraic restriction appears in initialization-gradient statistics and whether the restricted $SU(2)$ circuit retains sufficient capacity for representative invariant learning tasks. Dynamical Lie algebra dimension serves here as a structural diagnostic of one trainability-limiting mechanism; gradient behavior also depends on the loss, initialization, data distribution, and observable.

The contributions are summarized as follows.
\begin{enumerate}
  \item We formulate a hierarchical ancilla-controlled architecture in which commuting invariant-sector projectors select noncommuting trainable operations on a shared ancilla register, and we separate its variational parameter count from symbolic extraction and control costs.
  \item We derive a global joint-sector decomposition that expresses the output through sector probabilities and parameter-shared path responses, clarifying how retained sectors and ancilla size regulate capacity.
  \item We prove a group-independent containment theorem for the trainable dynamical Lie algebra and derive dimension bounds for general and truncated-control regimes.
  \item We give $U(1)$ and $\mathbb Z_2$ instances and develop an $SU(2)$ realization from a fixed Clebsch--Gordan tree, establishing rotation-equivariant circuit evolution with rotation-invariant scalar readout.
\end{enumerate}

\section{Preliminaries}

\subsection{Symmetry-Compatible Quantum Models}

Let a group $G$ act on inputs as $x\mapsto g\cdot x$ and on the data Hilbert space through a unitary representation $\rho_D(g)$. A symmetry-compatible encoding satisfies
\begin{equation}\label{eq:encoding-equiv-en}
U_E(g\cdot x)=\rho_D(g)U_E(x)\rho_D(g)^\dagger.
\end{equation}
A trainable unitary $U_{\bm\theta}$ is equivariant when
\begin{equation}\label{eq:trainable-equiv-en}
[U_{\bm\theta},\rho_D(g)]=0,
\qquad g\in G.
\end{equation}
The circuit dynamics in this work satisfies Eq.~\eqref{eq:trainable-equiv-en}. Scalar invariance additionally uses a $G$-invariant initial state and a measurement that commutes with the representation. For an ancilla register carrying the trivial representation, any measurement confined to that register has this property. Data encoding strongly affects both expressivity and trainability in such models \cite{perez-salinas_data_2020,schuld_circuit_2020,jerbi_quantum_2023,du_learnability_2021}.

\subsection{Trainability and Dynamical Lie Algebra}

For Hermitian trainable generators $\{H_j\}$, the dynamical Lie algebra is
\begin{equation}\label{eq:dla-definition-en}
\mathfrak g=\Lie\{-\ii H_j\}.
\end{equation}
It contains the initial infinitesimal directions and the directions obtained from their nested commutators. The Baker--Campbell--Hausdorff relation
\begin{equation}
e^{\epsilon A}e^{\epsilon B}e^{-\epsilon A}e^{-\epsilon B}
=
e^{\epsilon^2[A,B]+O(\epsilon^3)},
\end{equation}
shows how a commutator direction appears from short evolutions. The dimension and representation structure of $\mathfrak g$ constrain the reachable unitary orbit and the maximal quantum Fisher information rank. Algebras approaching $\su(2^N)$ describe highly expressive evolution and are associated with concentration for broad classes of randomly initialized circuits and global costs \cite{larocca_diagnosing_2022,larocca_theory_2023,fontana2023adjoint,barthe_gradients_2024,sim_expressibility_2019,abbas_power_2021,holmes_connecting_2022,mhiri_constrained_2024}. We use the growth of $\mathfrak g$ to diagnose this structural mechanism. The observed gradient scale remains dependent on the circuit distribution, loss, data, and measurement.

\subsection{Relation to Existing Equivariant Architectures}

Group-invariant and equivariant variational models commonly enforce symmetry through commuting trainable layers, invariant observables, group averaging, or symmetry-adapted bases \cite{sauvage_group-invariant_2022,meyer_exploiting_2023,gil-fuster_opportunities_2024}. For rotations, Fourier- and Schur-based constructions organize states and gates by irreducible-representation data, while spin-network circuits use recoupling structure directly \cite{west_provably_2024,zheng_speeding_2023,east_all_2023,zheng_sncqa_2023}. Selected Fourier-based families also admit explicit trainability guarantees \cite{west_provably_2024}. Horizontal quantum gates pursue a related geometric objective through directions on homogeneous spaces and relax exact layer-wise equivariance in settings where that constraint limits accessible motion \cite{wiersema_geometric_2024}.

Within this landscape, the present construction combines invariant projective controls with a shared ancilla that carries the noncommuting trainable response. The resulting joint-sector decomposition makes the trainable Lie closure amenable to a group-independent containment bound. The realized expressivity and gradient behavior still depend on the chosen generators, encoding, depth, and loss.

\section{General Hierarchical Ancilla-Controlled Architecture}

\subsection{Registers and Projective Controls}

Let $\mathcal H_D$ be the data Hilbert space and
\begin{equation}
\mathcal H_A\cong\mathbb C^{d_A},
\qquad d_A=2^m,
\end{equation}
a shared register of $m$ ancilla qubits. A temporary register $\mathcal H_T$ supports coherent label extraction and returns to its reference state after each controlled operation. The data register carries the representation $\rho_D(g)$; the ancilla registers carry the trivial representation.

Layer $\ell$ retains $K_\ell$ joint control modes,
\begin{equation}
\Pi_{\ell,1},\ldots,\Pi_{\ell,K_\ell},
\end{equation}
from a projective decomposition of $\mathcal H_D$. The remaining modes form
\begin{equation}\label{eq:complement-en}
\Pi_{\ell,\perp}
=
I_D-\sum_{s=1}^{K_\ell}\Pi_{\ell,s}.
\end{equation}
They obey
\begin{equation}\label{eq:same-layer-orthogonal-en}
\Pi_{\ell,a}\Pi_{\ell,b}
=
\delta_{ab}\Pi_{\ell,a},
\qquad
\sum_{a\in\{1,\ldots,K_\ell,\perp\}}\Pi_{\ell,a}=I_D.
\end{equation}
The count $K_\ell$ refers to retained label patterns for the entire layer. It does not count sectors assigned independently to each block.

A common projective refinement supplies cross-layer compatibility:
\begin{equation}\label{eq:cross-commute-en}
[\Pi_{\ell,a},\Pi_{r,b}]=0
\qquad
\text{for all }\ell,r,a,b.
\end{equation}
The symmetry condition is
\begin{equation}\label{eq:projector-group-invariant-en}
[\Pi_{\ell,a},\rho_D(g)]=0,
\qquad g\in G.
\end{equation}
The complementary sector remains in the state space and passes through the layer without activating a parameterized ancilla operation.

\subsection{Controlled Ancilla Dynamics}

Choose traceless Hermitian operators $\{\Lambda_q\}_{q=1}^{d_A^2-1}$ such that $\{-\ii\Lambda_q\}$ is a basis of $\su(d_A)$. The Hamiltonian associated with retained mode $s$ is
\begin{equation}\label{eq:Hs-en}
H_{\ell,s}(\bm\theta_{\ell,s})
=
\sum_{q=1}^{d_A^2-1}
\theta_{\ell,s,q}\Lambda_q .
\end{equation}
The controlled layer is
\begin{equation}\label{eq:Ul-en}
U_\ell(\bm\theta_\ell)
=
\exp\left[
-\ii\sum_{s=1}^{K_\ell}
\Pi_{\ell,s}\otimes H_{\ell,s}(\bm\theta_{\ell,s})
\right],
\end{equation}
Set $G_\ell=\sum_s\Pi_{\ell,s}\otimes H_{\ell,s}$. Projector orthogonality gives
\begin{equation}
G_\ell^r
=
\sum_{s=1}^{K_\ell}\Pi_{\ell,s}\otimes H_{\ell,s}^r,
\qquad r\ge1.
\end{equation}
The power-series expansion of $e^{-\ii G_\ell}$ then yields
\begin{equation}\label{eq:Ul-block-en}
U_\ell(\bm\theta_\ell)
=
\sum_{s=1}^{K_\ell}
\Pi_{\ell,s}\otimes e^{-\ii H_{\ell,s}(\bm\theta_{\ell,s})}
+
\Pi_{\ell,\perp}\otimes I_A .
\end{equation}

The full circuit is
\begin{equation}\label{eq:full-circuit-en}
U(x;\bm\theta)
=
U_L(\bm\theta_L)\cdots U_1(\bm\theta_1)U_E(x).
\end{equation}
All layers reuse $\mathcal H_A$. Its final state depends on the ordered sequence of retained labels encountered along the hierarchy; it does not contain a simultaneous classical record of every label.

For
\begin{equation}\label{eq:init-en}
|\psi_0\rangle
=
|\psi_0\rangle_D\otimes|0\rangle_A^{\otimes m}
\end{equation}
and $M=I_D\otimes M_A$, the scalar output is
\begin{equation}\label{eq:output-en}
f_{\bm\theta}(x)
=
\langle\psi_0|
U_E(x)^\dagger U_{\bm\theta}^\dagger
M U_{\bm\theta}U_E(x)
|\psi_0\rangle,
\end{equation}
where $U_{\bm\theta}=U_L\cdots U_1$.

\subsection{Operator-Level Implementation}

Let $W_\ell$ extract an atomic sector label into $\mathcal H_T$:
\begin{equation}\label{eq:label-extraction-en}
W_\ell:
|\phi_{\ell,\alpha}\rangle_D|0\rangle_T
\longmapsto
|\phi_{\ell,\alpha}\rangle_D|\alpha\rangle_T.
\end{equation}
The predicate $\alpha\in\mathcal A_{\ell,s}$, made concrete for the $SU(2)$ construction in Eq.~\eqref{eq:retained-from-atoms-en}, controls the corresponding ancilla unitary. Applying $W_\ell^\dagger$ afterwards clears the temporary register and realizes Eq.~\eqref{eq:Ul-block-en} on $\mathcal H_D\otimes\mathcal H_A$.

\subsection{Parameter and Symbolic Circuit-Cost Accounting}

A general traceless Hamiltonian on $\mathcal H_A$ contains at most $d_A^2-1$ real coefficients, giving
\begin{equation}\label{eq:parameter-count-en}
P_{\mathrm{var}}
\le
(d_A^2-1)\sum_{\ell=1}^{L}K_\ell.
\end{equation}
This count excludes the elementary-gate cost of $W_\ell$, its inverse, and the coherent membership predicates. With constant $m$ and $K_\ell$, it scales as $O(\log N)$ along a balanced tree.

To separate variational dimension from implementation overhead, let $C_{\mathrm{ext}}^{(\ell)}$ denote the gate cost of label extraction or uncomputation at layer $\ell$, $C_{\mathrm{pred}}^{(\ell,s)}$ the combined cost of computing and clearing the membership predicate for retained mode $s$, and $C_A^{(\ell,s)}$ the cost of the associated controlled ancilla evolution. A direct extract--control--uncompute realization obeys the symbolic bound
\begin{equation}\label{eq:symbolic-circuit-cost-en}
C_{\mathrm{circ}}
\le
\sum_{\ell=1}^{L}
\left[
2C_{\mathrm{ext}}^{(\ell)}
+
\sum_{s=1}^{K_\ell}
\left(
C_{\mathrm{pred}}^{(\ell,s)}+C_A^{(\ell,s)}
\right)
\right].
\end{equation}
The cost depends on the selected group, sector encoding, and native gate set. Storing a complete layer label requires $\lceil\log_2 R_\ell\rceil$ temporary qubits when $R_\ell$ atomic labels are distinguished. If only a small set of retained modes is queried, their predicates may instead be computed directly and reused, avoiding a full label register. Equation~\eqref{eq:symbolic-circuit-cost-en} leaves both strategies available and makes clear that the logarithmic variational scaling in Eq.~\eqref{eq:parameter-count-en} does not by itself imply logarithmic elementary-gate complexity.

\section{General Theoretical Analysis}

\subsection{Equivariant Evolution and Invariant Readout}

Define $\widetilde\rho(g)=\rho_D(g)\otimes I_A$.

\begin{proposition}\label{prop:equivariance-invariance-en}
If the control projectors satisfy Eq.~\eqref{eq:projector-group-invariant-en}, every controlled layer commutes with $\widetilde\rho(g)$. If the encoding satisfies Eq.~\eqref{eq:encoding-equiv-en}, the data initial state is $G$ invariant, the ancilla carries the trivial representation, and the measured observable acts only on the ancilla, then
\begin{equation}
[U_\ell,\widetilde\rho(g)]=0,
\qquad
f_{\bm\theta}(g\cdot x)=f_{\bm\theta}(x).
\end{equation}
\end{proposition}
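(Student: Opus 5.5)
I would prove the two claims in sequence: first the layer-wise commutation, then scalar invariance by a conjugation argument.

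\textbf{Step 1: each layer commutes with $\widetilde\rho(g)$.} I would use the block form Eq.~\eqref{eq:Ul-block-en} rather than the exponential. Each summand $\Pi_{\ell,s}\otimes e^{-\ii H_{\ell,s}}$ commutes with $\rho_D(g)\otimes I_A$. On the data factor this is Eq.~\eqref{eq:projector-group-invariant-en}; on the ancilla factor one side is the identity. The complementary term $\Pi_{\ell,\perp}\otimes I_A$ also commutes, because $\Pi_{\ell,\perp}=I_D-\sum_s\Pi_{\ell,s}$ by Eq.~\eqref{eq:complement-en}, and $I_D$ commutes with everything. Summing over the terms gives $[U_\ell,\widetilde\rho(g)]=0$ for every $\bm\theta_\ell$.

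Alternatively, $G_\ell$ commutes with $\widetilde\rho(g)$, so its exponential does too. The full trainable unitary $U_{\bm\theta}=U_L\cdots U_1$ then commutes with $\widetilde\rho(g)$, since it is a product of commuting factors.

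\textbf{Step 2: scalar invariance.} Here I read Eq.~\eqref{eq:encoding-equiv-en} as an operator identity on $\mathcal H_D$, extended by $\otimes I_A$, so that
\[
U_E(g\cdot x)=\widetilde\rho(g)U_E(x)\widetilde\rho(g)^\dagger .
\]
Substituting into Eq.~\eqref{eq:output-en} gives
\[
f_{\bm\theta}(g\cdot x)=\langle\psi_0|\widetilde\rho(g)U_E(x)^\dagger\widetilde\rho(g)^\dagger U_{\bm\theta}^\dagger M U_{\bm\theta}\widetilde\rho(g)U_E(x)\widetilde\rho(g)^\dagger|\psi_0\rangle .
\]
I would then move the inner $\widetilde\rho(g)$ factors outward in three moves:
\begin{enumerate}
\item Step 1 moves them through $U_{\bm\theta}$ and $U_{\bm\theta}^\dagger$.
\item $M=I_D\otimes M_A$ commutes with $\rho_D(g)\otimes I_A$, so the two factors cancel to the identity.
\item What remains is the outer $\widetilde\rho(g)^\dagger|\psi_0\rangle$ and its conjugate.
\end{enumerate}
Invariance of the initial state, $\rho_D(g)|\psi_0\rangle_D=|\psi_0\rangle_D$, together with the trivial ancilla action, gives $\widetilde\rho(g)^\dagger|\psi_0\rangle=|\psi_0\rangle$. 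The expression therefore reduces to $f_{\bm\theta}(x)$.

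\textbf{Main obstacle.} The only delicate point is the reading of "$G$ invariant" and of the encoding relation. If the state is invariant only up to a phase, $\rho_D(g)|\psi_0\rangle_D=e^{\ii\varphi(g)}|\psi_0\rangle_D$, the phase cancels between bra and ket, so the argument still goes through. I would note this explicitly. The temporary register is uncomputed, so it plays no role and the operator-level identity Eq.~\eqref{eq:Ul-block-en} suffices.
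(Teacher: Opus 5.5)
Your proof is correct and essentially the paper's. The paper shows that each generator term $\Pi_{\ell,s}\otimes H_{\ell,s}$ commutes with $\widetilde\rho(g)$; this is your ``alternatively'' route, while your primary block-form route additionally uses projector orthogonality but is equally valid. It then substitutes the encoding relation into the output, moves $\widetilde\rho(g)$ through $U_{\bm\theta}$ and $M$, and cancels it on the invariant initial state, exactly as you do; your remark that a phase $e^{\ii\varphi(g)}$ would also cancel is a harmless extra.
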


\begin{proof}
Each generator term satisfies
\begin{equation}
[\Pi_{\ell,s}\otimes H_{\ell,s},\rho_D(g)\otimes I_A]
=
[\Pi_{\ell,s},\rho_D(g)]\otimes H_{\ell,s}
=0.
\end{equation}
The layer exponentials and their ordered product $U_{\bm\theta}$ commute with $\widetilde\rho(g)$. Substituting Eq.~\eqref{eq:encoding-equiv-en} into Eq.~\eqref{eq:output-en} moves the representation through $U_{\bm\theta}$ and $M$. The remaining operators act on the invariant initial state and cancel.
\end{proof}

The parameterized circuit is equivariant because its evolution intertwines the group action. The measured scalar becomes invariant after imposing the initial-state and readout conditions.

\subsection{Trainable Lie Closure}

The closure analysis uses the complete projective decompositions in Eq.~\eqref{eq:same-layer-orthogonal-en}, their cross-layer commutativity in Eq.~\eqref{eq:cross-commute-en}, and ancilla-local noncommuting generators. The trainable generators and their Lie closure are
\begin{equation}\label{eq:controlled-generators-en}
G_{\ell,s,q}
=
\Pi_{\ell,s}\otimes\Lambda_q,
\qquad q=1,\ldots,d_A^2-1,
\end{equation}
\begin{equation}\label{eq:lie-def-en}
\mathfrak g
=
\Lie\{-\ii G_{\ell,s,q}\}_{\ell,s,q}.
\end{equation}
The complement contributes no generator because its ancilla Hamiltonian is zero.

For $\bm a=(a_1,\ldots,a_L)$ with $a_\ell\in\{1,\ldots,K_\ell,\perp\}$, define
\begin{equation}\label{eq:joint-projector-en}
P_{\bm a}
=
\prod_{\ell=1}^{L}\Pi_{\ell,a_\ell}.
\end{equation}
Equations~\eqref{eq:same-layer-orthogonal-en} and \eqref{eq:cross-commute-en} give
\begin{equation}\label{eq:joint-projector-properties-en}
P_{\bm a}P_{\bm b}
=
\delta_{\bm a\bm b}P_{\bm a},
\qquad
\sum_{\bm a}P_{\bm a}=I_D.
\end{equation}
Let $R=\#\{\bm a:P_{\bm a}\ne0\}$.

\begin{theorem}[Group-independent Lie-algebra containment]\label{thm:lie-containment-en}
Suppose the layer-wise projectors satisfy Eqs.~\eqref{eq:same-layer-orthogonal-en} and \eqref{eq:cross-commute-en}, and all parameter-dependent noncommuting operators are ancilla local as in Eq.~\eqref{eq:controlled-generators-en}. Then
\begin{equation}\label{eq:lie-bound-space-en}
\mathfrak g
\subseteq
\operatorname{span}\{P_{\bm a}:P_{\bm a}\ne0\}
\otimes\su(d_A)
\cong
\bigoplus_{\bm a:P_{\bm a}\ne0}\su(d_A).
\end{equation}
\end{theorem}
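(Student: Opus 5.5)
The plan is to identify a Lie subalgebra of $\mathfrak u(\mathcal H_D\otimes\mathcal H_A)$ that contains every generator $-\ii G_{\ell,s,q}$. The Lie closure $\mathfrak g$ in Eq.~\eqref{eq:lie-def-en} is the smallest Lie algebra containing the generators, so it must then lie inside that subalgebra. The candidate is
\begin{equation*}
\mathfrak h=\operatorname{span}_{\mathbb R}\{P_{\bm a}\otimes(-\ii\Lambda_q):P_{\bm a}\ne0,\ q=1,\ldots,d_A^2-1\},
\end{equation*}
which equals $\operatorname{span}\{P_{\bm a}:P_{\bm a}\ne0\}\otimes\su(d_A)$. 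The argument has three steps: show that the generators lie in $\mathfrak h$, show that $\mathfrak h$ is closed under commutators, and identify $\mathfrak h$ with the direct sum.

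\textbf{Generators lie in $\mathfrak h$.} Fix $\ell$ and $s$. Inserting the resolutions of identity $\sum_{a_r}\Pi_{r,a_r}=I_D$ from Eq.~\eqref{eq:same-layer-orthogonal-en} for every layer $r\neq\ell$, and using $\Pi_{\ell,s}^2=\Pi_{\ell,s}$, gives
\begin{equation*}
\Pi_{\ell,s}=\sum_{\bm a:\,a_\ell=s}P_{\bm a}.
\end{equation*}
Cross-layer commutativity, Eq.~\eqref{eq:cross-commute-en}, is what allows the factors to be reordered into the product form of Eq.~\eqref{eq:joint-projector-en}. Terms with $P_{\bm a}=0$ drop out. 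Hence $-\ii G_{\ell,s,q}=\sum_{a_\ell=s}P_{\bm a}\otimes(-\ii\Lambda_q)\in\mathfrak h$.

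\textbf{Closure under commutators.} For basis elements of $\mathfrak h$ we use $(A\otimes B)(C\otimes D)=AC\otimes BD$ together with the orthogonality relation in Eq.~\eqref{eq:joint-projector-properties-en}:
\begin{equation*}
[P_{\bm a}\otimes X,P_{\bm b}\otimes Y]=P_{\bm a}P_{\bm b}\otimes XY-P_{\bm b}P_{\bm a}\otimes YX=\delta_{\bm a\bm b}\,P_{\bm a}\otimes[X,Y].
\end{equation*}
Since $\su(d_A)$ is closed under the bracket, $[X,Y]\in\su(d_A)$, so $\mathfrak h$ is a real Lie subalgebra. Each element is also anti-Hermitian, because $P_{\bm a}$ is Hermitian as a product of commuting Hermitian projectors. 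It follows that $\mathfrak g\subseteq\mathfrak h$.

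\textbf{Identification with the direct sum.} Consider the map $\bigoplus_{\bm a:P_{\bm a}\ne0}\su(d_A)\to\mathfrak h$ sending $(X_{\bm a})_{\bm a}$ to $\sum_{\bm a}P_{\bm a}\otimes X_{\bm a}$. The commutator computation above shows that it is a Lie homomorphism, and it is surjective by construction. For injectivity, suppose $\sum_{\bm a}P_{\bm a}\otimes X_{\bm a}=0$. Multiplying on the left by $P_{\bm b}\otimes I_A$ gives $P_{\bm b}\otimes X_{\bm b}=0$, and since $P_{\bm b}\neq0$ this forces $X_{\bm b}=0$. The map is therefore an isomorphism, which gives Eq.~\eqref{eq:lie-bound-space-en}.

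The only step needing care is the first one. Expanding $\Pi_{\ell,s}$ into joint projectors requires commutativity so that each $P_{\bm a}$ is a well-defined orthogonal projector independent of the ordering of its factors. I also need to verify that the relations in Eq.~\eqref{eq:joint-projector-properties-en} follow from Eqs.~\eqref{eq:same-layer-orthogonal-en} and~\eqref{eq:cross-commute-en}; the paper asserts this, and I would check it briefly. The remaining steps are routine.
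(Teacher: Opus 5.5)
Your proof is correct and follows the paper's argument: expand each $\Pi_{\ell,s}$ into joint projectors, show the span is closed under brackets via $P_{\bm a}P_{\bm b}=\delta_{\bm a\bm b}P_{\bm a}$, and conclude by minimality of the Lie closure. Your injectivity check for the identification with $\bigoplus_{\bm a}\su(d_A)$ adds a detail the paper leaves implicit.
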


\begin{proof}
Completeness gives
\begin{equation}\label{eq:single-to-joint-en}
\Pi_{\ell,s}
=
\sum_{\bm a:a_\ell=s}P_{\bm a}.
\end{equation}
Every generator lies in the right-hand side of Eq.~\eqref{eq:lie-bound-space-en}. This space is closed under commutation since
\begin{equation}
[P_{\bm a}\otimes X,P_{\bm b}\otimes Y]
=
\delta_{\bm a\bm b}P_{\bm a}\otimes[X,Y].
\end{equation}
It therefore contains the generated Lie algebra.
\end{proof}

The theorem is independent of the symmetry group. A chosen group supplies projectors satisfying the stated algebraic conditions.

\begin{corollary}[Dimension upper bound]\label{cor:dimension-en}
\begin{equation}\label{eq:dim-bound-en}
\dim\mathfrak g
\le
R(d_A^2-1)
\le
(d_A^2-1)\prod_{\ell=1}^{L}(K_\ell+1).
\end{equation}
\end{corollary}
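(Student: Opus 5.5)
The plan is to read both inequalities directly off Theorem~\ref{thm:lie-containment-en}. For the first one, I bound $\dim\mathfrak g$ by the dimension of the containing space $\operatorname{span}\{P_{\bm a}:P_{\bm a}\ne0\}\otimes\su(d_A)$. This is a real Lie algebra, so I will compute its real dimension. The nonzero joint projectors are pairwise orthogonal by Eq.~\eqref{eq:joint-projector-properties-en}. Nonzero mutually orthogonal projectors are linearly independent: if $\sum_{\bm a}c_{\bm a}P_{\bm a}=0$, multiplying on the left by $P_{\bm b}$ gives $c_{\bm b}P_{\bm b}=0$, so $c_{\bm b}=0$. Their span therefore has dimension exactly $R$.

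Next I identify the containing space with the direct sum $\bigoplus_{\bm a}\su(d_A)$ via $P_{\bm a}\otimes X\mapsto X$ in the $\bm a$-th slot. Linear independence of the $P_{\bm a}$ makes this map injective on the tensor product. The real dimension is then $R\dim_{\mathbb R}\su(d_A)=R(d_A^2-1)$, which gives the first inequality. A word on fields: $\mathfrak g$ is generated by $-\ii\Pi_{\ell,s}\otimes\Lambda_q$, and the relevant span takes real coefficients on the $P_{\bm a}$ and elements of $\su(d_A)$, which is consistent with the theorem's statement. So I only need to note that the real span of the $P_{\bm a}$ has real dimension $R$.

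For the second inequality, I count index tuples. Each $\bm a=(a_1,\ldots,a_L)$ has $a_\ell\in\{1,\ldots,K_\ell,\perp\}$, which is a set of $K_\ell+1$ values. Hence the total number of tuples is $\prod_\ell(K_\ell+1)$, and $R$, which counts only the tuples with $P_{\bm a}\ne0$, is at most this product. Multiplying by $d_A^2-1$ finishes the proof.

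There is no real obstacle. The only point needing care is justifying that the dimension of the span of projectors is exactly $R$, rather than merely at most $R$; in fact only the upper bound $\le R$ is needed for the corollary, and that follows trivially from spanning by $R$ elements. I would state it that way to keep the proof minimal.
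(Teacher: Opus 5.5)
Your proposal is correct and follows the paper's argument. You bound $\dim\mathfrak g$ by the containing space of Theorem~\ref{thm:lie-containment-en}, which has one copy of $\su(d_A)$ per nonzero joint sector, and you bound $R$ by the $\prod_\ell(K_\ell+1)$ possible label tuples; your added points about the linear independence of orthogonal projectors and real dimensions supply details the paper leaves implicit, and as you note only the spanning bound is actually needed.
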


\begin{proof}
Each nonzero joint sector supports at most one copy of $\su(d_A)$. There are no more than $\prod_\ell(K_\ell+1)$ label tuples, and incompatible tuples give zero projectors.
\end{proof}

\begin{corollary}[Polynomial growth under truncated control]\label{cor:polynomial-en}
For $d_A=2^m$, $K_\ell\le K$, and $L=\lceil\log_2N\rceil$, constant $m$ and $K$ give
\begin{equation}\label{eq:poly-bound-en}
\dim\mathfrak g
\le
(4^m-1)(K+1)^{\lceil\log_2N\rceil}
=
O\!\left(4^mN^{\log_2(K+1)}\right).
\end{equation}
\end{corollary}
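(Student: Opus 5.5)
The plan is to specialize Corollary~\ref{cor:dimension-en} and then convert the exponent into a power of $N$. First I will invoke Eq.~\eqref{eq:dim-bound-en},
\begin{equation*}
\dim\mathfrak g\le (d_A^2-1)\prod_{\ell=1}^{L}(K_\ell+1).
\end{equation*}
Substituting $d_A=2^m$ gives the prefactor $d_A^2-1=4^m-1$. Since every factor $K_\ell+1$ is positive and $K_\ell\le K$, the product is bounded termwise by $(K+1)^L$. Setting $L=\lceil\log_2N\rceil$ then yields the first inequality in Eq.~\eqref{eq:poly-bound-en} directly.

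For the asymptotic form, I will use $\lceil\log_2N\rceil\le\log_2N+1$. Because $K+1\ge1$, this gives
\begin{equation*}
(K+1)^{\lceil\log_2N\rceil}\le (K+1)\,(K+1)^{\log_2N}.
\end{equation*}
The identity $(K+1)^{\log_2N}=2^{\log_2(K+1)\log_2N}=N^{\log_2(K+1)}$ then rewrites the right-hand side as $(K+1)\,N^{\log_2(K+1)}$. With $m$ and $K$ held constant, the leftover factor $(K+1)$ and the bound $4^m-1<4^m$ are absorbed into the $O(\cdot)$. This produces $O\!\left(4^mN^{\log_2(K+1)}\right)$, which is polynomial in $N$ with the fixed exponent $\log_2(K+1)$.

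There is no real obstacle here; the only points requiring care are the ceiling and the monotonicity in $K_\ell$. Monotonicity needs $K+1\ge1$, which holds because $K_\ell\ge0$. The ceiling costs at most the single extra multiplicative factor $K+1$. I will also note that $4^m$ is retained in the $O(\cdot)$ only for readability, since it is a constant under the stated assumptions.
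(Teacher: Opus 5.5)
Your proposal is correct and follows the route the paper intends; the paper gives no separate proof, and the corollary is simply Eq.~\eqref{eq:dim-bound-en} specialized with $d_A^2-1=4^m-1$, $K_\ell\le K$, $L=\lceil\log_2N\rceil$, and the identity $(K+1)^{\log_2N}=N^{\log_2(K+1)}$. Your treatment of the ceiling, which contributes only the constant factor $K+1$ absorbed into the $O(\cdot)$, fills in a detail the paper leaves unstated.
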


The explicit parameter count in Eq.~\eqref{eq:parameter-count-en} is additive across layers. The Lie closure can resolve the multiplicative collection of compatible joint sectors.

\subsection{Global Joint-Sector Decomposition}

The joint projectors also give an explicit form for the entire variational circuit. For each layer label, define
\begin{equation}\label{eq:path-layer-response-en}
V_{\ell,a_\ell}
=
\begin{cases}
e^{-\ii H_{\ell,a_\ell}(\bm\theta_{\ell,a_\ell})},
&a_\ell\in\{1,\ldots,K_\ell\},\\
I_A,&a_\ell=\perp,
\end{cases}
\end{equation}
and associate a compatible label path $\bm a$ with the ordered ancilla response
\begin{equation}\label{eq:path-response-en}
V_{\bm a}(\bm\theta)
=
V_{L,a_L}\cdots V_{1,a_1}.
\end{equation}

\begin{proposition}[Joint-sector circuit decomposition]\label{prop:global-sector-decomposition-en}
Under Eqs.~\eqref{eq:same-layer-orthogonal-en} and \eqref{eq:cross-commute-en}, the trainable circuit satisfies
\begin{equation}\label{eq:global-sector-decomposition-en}
U_{\bm\theta}
=
\sum_{\bm a:P_{\bm a}\ne0}
P_{\bm a}\otimes V_{\bm a}(\bm\theta).
\end{equation}
For $|\phi_x\rangle_D=U_E(x)|\psi_0\rangle_D$ and an ancilla initial state $|0_A\rangle$, its scalar output can be written as
\begin{equation}\label{eq:sector-mixture-output-en}
f_{\bm\theta}(x)
=
\sum_{\bm a:P_{\bm a}\ne0}
p_{\bm a}(x)\,
\mu_{\bm a}(\bm\theta),
\end{equation}
where
\begin{equation}\label{eq:path-probability-response-en}
\begin{aligned}
p_{\bm a}(x)
&=\langle\phi_x|P_{\bm a}|\phi_x\rangle,\\
\mu_{\bm a}(\bm\theta)
&=\langle0_A|V_{\bm a}^\dagger M_A V_{\bm a}|0_A\rangle.
\end{aligned}
\end{equation}
\end{proposition}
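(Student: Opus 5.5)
The plan is to rewrite each layer in the joint-sector basis and then multiply the layers together. Orthogonality and cross-layer commutativity will then collapse the product to a single sum over paths.

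\textbf{Layers in the joint basis.} Start from the block form Eq.~\eqref{eq:Ul-block-en} together with the definition Eq.~\eqref{eq:path-layer-response-en}. Together they give
\begin{equation*}
U_\ell=\sum_{a_\ell\in\{1,\ldots,K_\ell,\perp\}}\Pi_{\ell,a_\ell}\otimes V_{\ell,a_\ell}.
\end{equation*}
By Eq.~\eqref{eq:single-to-joint-en}, extended to $a_\ell=\perp$ through completeness, each $\Pi_{\ell,a_\ell}$ equals $\sum_{\bm a:\,(\bm a)_\ell=a_\ell}P_{\bm a}$. Substituting this yields
\begin{equation*}
U_\ell=\sum_{\bm a}P_{\bm a}\otimes V_{\ell,a_\ell}.
\end{equation*}
Terms with $P_{\bm a}=0$ vanish and can be dropped.

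\textbf{Multiplying the layers.} Next form the ordered product $U_L\cdots U_1$. Expanding it produces cross terms of the form
\begin{equation*}
P_{\bm a^{(L)}}\cdots P_{\bm a^{(1)}}\otimes V_{L,a^{(L)}_L}\cdots V_{1,a^{(1)}_1}.
\end{equation*}
Eq.~\eqref{eq:joint-projector-properties-en} kills every term unless all the tuples coincide. The surviving terms give exactly
\begin{equation*}
\sum_{\bm a}P_{\bm a}\otimes V_{L,a_L}\cdots V_{1,a_1},
\end{equation*}
which is Eq.~\eqref{eq:global-sector-decomposition-en}. The one point that needs care is the data-side factor: because the data parts are projectors rather than general operators, no ordering issue arises, and orthogonality absorbs the product. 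The ancilla order is preserved because each term in the product already places the ancilla factors in layer order.

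\textbf{Output formula.} Insert the decomposition into Eq.~\eqref{eq:output-en} with $M=I_D\otimes M_A$. This gives
\begin{equation*}
U_{\bm\theta}^\dagger MU_{\bm\theta}=\sum_{\bm a,\bm b}P_{\bm a}P_{\bm b}\otimes V_{\bm a}^\dagger M_AV_{\bm b}.
\end{equation*}
Here $P_{\bm a}^\dagger=P_{\bm a}$ and $I_D$ commutes with everything. Orthogonality reduces the double sum to the diagonal terms $\bm a=\bm b$. Taking the expectation value in $|\phi_x\rangle\otimes|0_A\rangle$ then factorizes each term into $p_{\bm a}(x)\mu_{\bm a}(\bm\theta)$, which gives Eq.~\eqref{eq:sector-mixture-output-en}.

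I expect no real obstacle. The only subtle step is the bookkeeping in the second paragraph, which rests on orthogonality of the joint projectors.
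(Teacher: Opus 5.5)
Your proposal is correct and follows the paper's proof essentially step for step. You rewrite each layer as $U_\ell=\sum_{\bm a}P_{\bm a}\otimes V_{\ell,a_\ell}$ via Eq.~\eqref{eq:single-to-joint-en}, collapse the ordered product using $P_{\bm a}P_{\bm b}=\delta_{\bm a\bm b}P_{\bm a}$, and remove cross terms in the expectation value by the same orthogonality. Your remark that Eq.~\eqref{eq:single-to-joint-en} also holds for $a_\ell=\perp$ by completeness makes explicit a step the paper leaves implicit.
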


\begin{proof}
Equation~\eqref{eq:single-to-joint-en} inserted into Eq.~\eqref{eq:Ul-block-en} gives
$U_\ell=\sum_{\bm a}P_{\bm a}\otimes V_{\ell,a_\ell}$.
Multiplying the layers and using Eq.~\eqref{eq:joint-projector-properties-en} proves Eq.~\eqref{eq:global-sector-decomposition-en}. Substitution into Eq.~\eqref{eq:output-en} removes cross terms through $P_{\bm a}P_{\bm b}=\delta_{\bm a\bm b}P_{\bm a}$, yielding Eqs.~\eqref{eq:sector-mixture-output-en} and \eqref{eq:path-probability-response-en}.
\end{proof}

The input enters Eq.~\eqref{eq:sector-mixture-output-en} through the symmetry-compatible sector weights $p_{\bm a}(x)$. The trainable circuit assigns an ancilla response to each compatible hierarchical path. This form makes the roles of data-side projective structure and ancilla-side noncommuting dynamics explicit.

\subsection{Expressivity under Sector-Conditioned Dynamics}

Three quantities organize the capacity of the architecture. The number $R$ determines how many compatible paths can contribute distinct responses, $d_A$ sets the dimension of the response system, and the available ancilla generators determine which functions $\mu_{\bm a}(\bm\theta)$ can be produced. Increasing any of these resources can enlarge the model family. Pruning retained modes reduces both the parameter count and the set of controllable paths, providing a direct capacity control.

The path unitaries in Eq.~\eqref{eq:path-response-en} are generally correlated. A layer unitary $V_{\ell,s}$ is reused by every joint sector whose $\ell$th label equals $s$, so the $R$ responses cannot be selected independently with only the additive parameter budget in Eq.~\eqref{eq:parameter-count-en}. This parameter sharing couples related paths and supplies a hierarchical inductive bias. It also explains how the architecture can distinguish many compatible sectors without assigning a separate parameter block to every joint label.

Equation~\eqref{eq:sector-mixture-output-en} identifies a concrete limitation as well. Inputs with identical joint-sector weights cannot be separated by an ancilla-only scalar readout of this form. Finer projectors, additional invariant observables, or a richer encoding can resolve more information when required by the task. The architecture therefore offers a tunable range between aggressive sector aggregation and finer sector-conditioned responses.

\subsection{Trainability Implications}

Theorem~\ref{thm:lie-containment-en} localizes the non-Abelian trainable dynamics to the shared ancilla, while the data-dependent controls remain in the commutative algebra generated by the joint projectors. The accessible unitary orbit consequently avoids arbitrary data-register operators. Related Lie-algebraic quantities constrain quantum Fisher information rank and overparameterization thresholds \cite{larocca_theory_2023,fontana2023adjoint,ragone_lie_2024}.

The ancilla size $m$, retained-mode counts $K_\ell$, and realized path count $R$ regulate the accessible sector-conditioned transformations and algebraic dimension. Empty label tuples lower $R$, and finite-depth generator choices can reduce the realized closure further. Rapid dynamical Lie-algebra growth is one mechanism associated with concentration in randomly initialized variational circuits \cite{mcclean2018barren,cerezo2021cost,holmes_connecting_2022,larocca_review_2024,li_concentration_2022}. Equation~\eqref{eq:dim-bound-en} directly controls this mechanism under the stated structural conditions. It does not constitute a loss-independent guarantee against barren plateaus: gradient statistics also depend on the observable, data distribution, initialization, and depth. The finite-size tests in Sec.~\ref{sec:numerical-en} examine whether the structural restriction is accompanied by a larger initialization signal in the studied setting.

\subsection{Examples beyond \texorpdfstring{$SU(2)$}{SU(2)}}

The construction only requires commuting invariant projectors, so it is not tied to angular-momentum coupling. Consider a $U(1)$ particle-number symmetry on qubits or fermionic modes. For every block $B$ in a fixed nested or disjoint hierarchy, define
\begin{equation}\label{eq:u1-block-number-en}
\mathcal N_B=\sum_{i\in B}n_i,
\qquad
Q_B^{(n)}=\text{the spectral projector of $\mathcal N_B$ at $n$}.
\end{equation}
Block-number operators commute for disjoint blocks. For $C\subset B$, $\mathcal N_B=\mathcal N_C+\mathcal N_{B\setminus C}$ also gives $[\mathcal N_C,\mathcal N_B]=0$. Their spectral projectors therefore admit a common refinement and commute with the global action $e^{-\ii\varphi\mathcal N}$, where $\mathcal N=\sum_i n_i$. Retained number patterns can control the same shared-ancilla dynamics, and Theorem~\ref{thm:lie-containment-en} applies without changing its proof.

A $\mathbb Z_2$ parity hierarchy gives a discrete analogue. Products $Z_B=\prod_{i\in B}Z_i$ on a laminar family of blocks commute, and the projectors $(I\pm Z_B)/2$ provide binary invariant controls whenever the problem symmetry is generated by the corresponding global parity. These examples separate the general algebraic construction from the $SU(2)$ implementation developed next.

\section{\texorpdfstring{$SU(2)$}{SU(2)} Realization with a Fixed Clebsch--Gordan Tree}

\subsection{Rotation Representation and Equivariant Encoding}

Let the input be an ordered collection of three-dimensional vectors
\begin{equation}
x=(\bm p_0,\ldots,\bm p_{N-1}),
\qquad
\bm p_k=(p_{k,x},p_{k,y},p_{k,z})\in\mathbb R^3.
\end{equation}
The coordinates are centered and isotropically rescaled before encoding, so the preprocessing does not select a spatial direction. For $N$ spin-$1/2$ data qubits, a group element $g\in SU(2)$ acts as
\begin{equation}\label{eq:rho-en}
\rho_R(g)=U_g^{\otimes N}.
\end{equation}
The image of $g$ under the double-cover map is denoted by $R_g\in SO(3)$, and its action on the input is $g\cdot x=(R_g\bm p_0,\ldots,R_g\bm p_{N-1})$.

We encode one vector into one data qubit through
\begin{equation}\label{eq:point-encoding-en}
U(\bm p_k)
=
\exp\!\left[
-\frac{\ii}{2}
\left(
p_{k,x}X+p_{k,y}Y+p_{k,z}Z
\right)
\right],
\end{equation}
and take the full encoding to be
\begin{equation}\label{eq:data-encoding-en}
U_E(x)
=
\bigotimes_{k=0}^{N-1}U(\bm p_k).
\end{equation}
Writing $\bm\sigma=(X,Y,Z)$, the defining relation between the two rotation representations is
\begin{equation}\label{eq:pauli-vector-covariance-en}
U_g(\bm p\cdot\bm\sigma)U_g^\dagger
=
(R_g\bm p)\cdot\bm\sigma.
\end{equation}
Unitary conjugation commutes with the matrix exponential. Equations~\eqref{eq:point-encoding-en} and \eqref{eq:pauli-vector-covariance-en} therefore give
\begin{equation}
U(R_g\bm p_k)
=
U_gU(\bm p_k)U_g^\dagger.
\end{equation}
Taking the tensor product over all points establishes the concrete encoding relation
\begin{equation}\label{eq:su2-encoding-en}
U_E(g\cdot x)
=
\rho_R(g)U_E(x)\rho_R(g)^\dagger.
\end{equation}

\subsection{Invariant Initial State and Its Preparation}

For even $N$, pair the data qubits according to the bottom layer of the coupling tree. Each pair is initialized in the singlet
\begin{equation}\label{eq:singlet-state-en}
|s\rangle
=
\frac{|01\rangle-|10\rangle}{\sqrt2},
\end{equation}
and the data-register initial state is
\begin{equation}\label{eq:singlet-product-state-en}
|\psi_0\rangle_D
=
\bigotimes_{k=0}^{N/2-1}|s\rangle_{2k,2k+1}.
\end{equation}
The singlet is the antisymmetric tensor in two dimensions. Hence
\begin{equation}
(U_g\otimes U_g)|s\rangle
=
\det(U_g)|s\rangle
=
|s\rangle
\end{equation}
for every $U_g\in SU(2)$, which gives
\begin{equation}\label{eq:initial-state-invariance-en}
\rho_R(g)|\psi_0\rangle_D
=
|\psi_0\rangle_D.
\end{equation}

\begin{figure}[t]
  \centering
  \begin{quantikz}[row sep=0.45cm, column sep=0.5cm]
    \lstick{$|0\rangle$} & \gate{H} & \gate{Z} & \ctrl{1} & \qw \\
    \lstick{$|0\rangle$} & \gate{X} & \qw      & \targ{}  & \qw
  \end{quantikz}
  \caption{Constant-depth preparation of the two-qubit singlet used in each bottom-layer pair.}
  \label{fig:singlet-preparation-en}
\end{figure}

Figure~\ref{fig:singlet-preparation-en} prepares Eq.~\eqref{eq:singlet-state-en} from $|00\rangle$. Applying this circuit in parallel to $(0,1),(2,3),\ldots$ matches the two-qubit blocks at the bottom of the fixed Clebsch--Gordan tree of Fig.~\ref{fig:hierarchy-en}. The shared ancilla is initialized as $|0\rangle_A^{\otimes m}$ and carries the trivial representation. Defining
\begin{equation}
|\phi(x)\rangle_D
=
U_E(x)|\psi_0\rangle_D,
\end{equation}
Eqs.~\eqref{eq:su2-encoding-en} and \eqref{eq:initial-state-invariance-en} yield the state-level encoding covariance
\begin{equation}\label{eq:encoded-state-equivariance-en}
|\phi(g\cdot x)\rangle_D
=
\rho_R(g)|\phi(x)\rangle_D.
\end{equation}

\subsection{Recursive Spin-Sector Projectors}

A fixed binary Clebsch--Gordan tree recursively couples adjacent spins. For $N=2^L$, layer $\ell$ contains
\begin{equation}\label{eq:block-en}
B_{\ell,r}
=
\{2^\ell r,\ldots,2^\ell(r+1)-1\},
\qquad
r=0,\ldots,\frac{N}{2^\ell}-1.
\end{equation}
The same merging rule produces depth $O(\log N)$ for other even $N$.

\begin{figure}[t]
  \centering
  \includegraphics[width=\columnwidth]{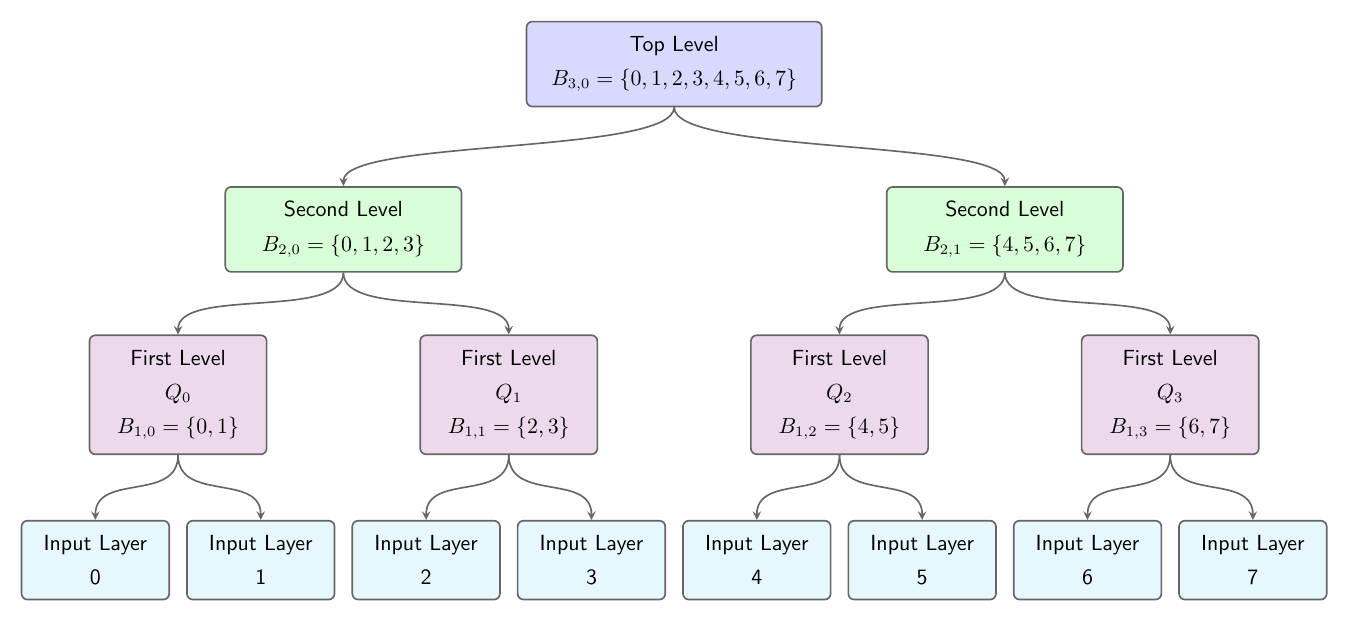}
  \caption{Balanced hierarchical merging along a fixed $SU(2)$ spin-coupling tree.}
  \label{fig:hierarchy-en}
\end{figure}

For every internal block $B$, define
\begin{equation}\label{eq:intermediate-spin-en}
\mathbf J_B^2
=
\left(\sum_{i\in B}\mathbf S_i\right)^2,
\qquad
\mathbf S_i=\frac12(X_i,Y_i,Z_i).
\end{equation}

\begin{lemma}\label{lem:tree-projectors-commute-en}
The intermediate-spin operators of a fixed coupling tree commute. Their joint spectral projectors form a common refinement and commute with $\rho_R(g)$.
\end{lemma}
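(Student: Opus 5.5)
The plan is to reduce everything to a structural fact about the blocks of a fixed coupling tree: any two blocks are either disjoint or nested. This holds for the dyadic blocks $B_{\ell,r}$ of Eq.~\eqref{eq:block-en}, and it is inherited by the general balanced merging rule because each block is a union of its children. Write $\mathbf J_B=\sum_{i\in B}\mathbf S_i$. The two cases can then be handled separately.

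\emph{Disjoint blocks.} If $B\cap C=\emptyset$, then $\mathbf J_B^2$ and $\mathbf J_C^2$ act on disjoint tensor factors, so they commute trivially.

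\emph{Nested blocks.} If $C\subset B$, I will mimic the $U(1)$ argument of Eq.~\eqref{eq:u1-block-number-en}. Write $\mathbf J_B=\mathbf J_C+\mathbf J_{B\setminus C}$, which gives
\begin{equation}
\mathbf J_B^2=\mathbf J_C^2+\mathbf J_{B\setminus C}^2+2\,\mathbf J_C\cdot\mathbf J_{B\setminus C}.
\end{equation}
The term $\mathbf J_C^2$ commutes with itself. It also commutes with $\mathbf J_{B\setminus C}^2$, since the two act on disjoint sites. The only step needing care is the cross term. Here $\mathbf J_C^2$ is the Casimir of the $SU(2)$ action on the sites of $C$, so it commutes with every component $J_C^{\alpha}$. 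It also commutes with $J_{B\setminus C}^{\alpha}$, again because the supports are disjoint. Hence $[\mathbf J_C^2,\mathbf J_C\cdot\mathbf J_{B\setminus C}]=0$. Verifying $[\mathbf J_C^2,J_C^\alpha]=0$ from the spin commutation relations $[J^\alpha,J^\beta]=\ii\epsilon_{\alpha\beta\gamma}J^\gamma$ is the main (though standard) computation. I regard this nested case, and in particular making the laminar structure of the tree explicit, as the only real obstacle.

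\emph{Common refinement.} The family $\{\mathbf J_B^2\}$ consists of pairwise commuting Hermitian operators on the finite-dimensional space $\mathcal H_D$, so it is simultaneously diagonalizable. The joint spectral projectors $P_{\bm j}=\prod_B Q_B^{(j_B)}$, where $Q_B^{(j)}$ is the spectral projector of $\mathbf J_B^2$ at eigenvalue $j(j+1)$, are mutually orthogonal and sum to $I_D$. Any layer-wise retained mode $\Pi_{\ell,s}$ is a sum of such atoms, which yields Eqs.~\eqref{eq:same-layer-orthogonal-en} and \eqref{eq:cross-commute-en}.

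\emph{Group invariance.} Each single-site operator satisfies $U_g\mathbf S_iU_g^\dagger=R_g^{-1}\mathbf S_i$ in vector form; the convention for $R_g$ does not matter here, following Eq.~\eqref{eq:pauli-vector-covariance-en}. Therefore $\rho_R(g)\mathbf J_B\rho_R(g)^\dagger$ is an orthogonal rotation of $\mathbf J_B$, and the squared norm $\mathbf J_B^2$ is preserved. Since $\rho_R(g)$ commutes with $\mathbf J_B^2$, it commutes with every spectral projector of $\mathbf J_B^2$, because each such projector is a polynomial in $\mathbf J_B^2$. It then commutes with all products and sums of these projectors, which gives Eq.~\eqref{eq:projector-group-invariant-en}.
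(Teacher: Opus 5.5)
Your proof is correct and follows essentially the same route as the paper. Both use the disjoint-or-nested dichotomy for tree blocks, the split $\mathbf J_B=\mathbf J_C+\mathbf J_{B\setminus C}$ with $\mathbf J_C^2$ commuting with every component of $\mathbf J_C$ and with everything supported outside $C$, and rotation invariance of each Casimir carried over to its spectral projectors. You also spell out details the paper leaves implicit: the expansion of $\mathbf J_B^2$, the vector-operator transformation law, and the fact that spectral projectors are polynomials in $\mathbf J_B^2$.
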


\begin{proof}
Two tree blocks are disjoint or nested. Operators on disjoint blocks commute. For $C\subset B$, write $\mathbf J_B=\mathbf J_C+\mathbf J_{B\setminus C}$. The Casimir $\mathbf J_C^2$ commutes with each component of $\mathbf J_C$, and the remaining components act outside $C$. Hence $[\mathbf J_C^2,\mathbf J_B^2]=0$. Each intermediate Casimir is invariant under simultaneous spin rotations, so its spectral projectors commute with the global representation.
\end{proof}

For the two-spin decomposition $\frac12\otimes\frac12=0\oplus1$, define
\begin{equation}
C=X\otimes X+Y\otimes Y+Z\otimes Z.
\end{equation}
The total angular momentum satisfies
\begin{equation}
\mathbf J^2
=(\mathbf S_1+\mathbf S_2)^2
=\frac32I+\frac12C.
\end{equation}
Since $\mathbf J^2$ has eigenvalues $j(j+1)$ for $j=0,1$, the corresponding eigenvalues of $C$ are $-3$ and $1$. Spectral interpolation gives
\begin{equation}\label{eq:singlet-triplet-projectors-en}
P^{(0)}=\frac14(I-C),
\qquad
P^{(1)}=\frac14(3I+C).
\end{equation}
These orthogonal projectors depend only on total spin and commute with $U_g\otimes U_g$.

Let $\mathcal T_{\mathrm{int}}$ denote the internal blocks of the fixed binary coupling tree. If an internal block $B$ has children $B_L$ and $B_R$ carrying spins $j_{B_L}$ and $j_{B_R}$, angular-momentum addition permits
\begin{equation}\label{eq:recursive-spin-range-en}
j_B
\in
\left\{
|j_{B_L}-j_{B_R}|,
|j_{B_L}-j_{B_R}|+1,
\ldots,
j_{B_L}+j_{B_R}
\right\}.
\end{equation}
A complete fixed-tree label is
\begin{equation}\label{eq:tree-label-en}
\alpha=\{j_B:B\in\mathcal T_{\mathrm{int}}\}.
\end{equation}
Writing $P_B^{(j_B)}$ for the spectral projector of $\mathbf J_B^2$ with eigenvalue $j_B(j_B+1)$, define the atomic projector
\begin{equation}\label{eq:recursive-atomic-projector-en}
P_\alpha
=
\prod_{B\in\mathcal T_{\mathrm{int}}}P_B^{(j_B)}.
\end{equation}
Lemma~\ref{lem:tree-projectors-commute-en} makes the product order irrelevant. A label violating Eq.~\eqref{eq:recursive-spin-range-en} at any merge has $P_\alpha=0$; the compatible labels give mutually orthogonal projectors whose sum is $I_D$.

Joint eigenvalues of the intermediate Casimirs label atomic projectors $\{P_\alpha\}$. A retained mode is specified by the spins on one tree layer. For layer $\ell$, let $\mathcal B_\ell=\{B_{\ell,r}\}_r$ collect the blocks of Eq.~\eqref{eq:block-en} and write $\alpha|_{\mathcal B_\ell}$ for the restriction of a complete label to these blocks. Retained mode $s$ fixes a joint spin pattern $\bm\jmath_{\ell,s}$ on $\mathcal B_\ell$ and gathers the atoms compatible with it:
\begin{equation}\label{eq:retained-from-atoms-en}
\mathcal A_{\ell,s}
=
\bigl\{\alpha:\alpha|_{\mathcal B_\ell}=\bm\jmath_{\ell,s}\bigr\},
\qquad
\Pi_{\ell,s}
=
\sum_{\alpha\in\mathcal A_{\ell,s}}P_\alpha,
\qquad
\mathcal A_{\ell,s}\cap\mathcal A_{\ell,t}=\varnothing
\quad(s\ne t).
\end{equation}
The patterns $\bm\jmath_{\ell,1},\ldots,\bm\jmath_{\ell,K_\ell}$ are distinct, which gives the disjointness, and labels outside $\bigcup_s\mathcal A_{\ell,s}$ form the complement mode $\Pi_{\ell,\perp}$ of Eq.~\eqref{eq:complement-en}. The disjointness is imposed within a single layer; the same atom may enter retained modes at several layers.
These projectors supply the concrete controls of the parameterized circuit and its equivariance condition, since
\begin{equation}\label{eq:su2-retained-projector-invariance-en}
[\Pi_{\ell,s},\rho_R(g)]=0.
\end{equation}
They also provide the common refinement required by Eq.~\eqref{eq:cross-commute-en}. Changing coupling orders across layers would generally remove this property.

For $N=4$, take the tree $((0,1),(2,3))$. Its layer-one patterns are
\begin{equation}\label{eq:n4-layer-one-atoms-en}
P_{01}^{(j_{01})}P_{23}^{(j_{23})},
\qquad
j_{01},j_{23}\in\{0,1\}.
\end{equation}
An illustrative truncation is
\begin{equation}\label{eq:n4-layer-one-retained-en}
\begin{aligned}
\Pi_{1,1}&=P_{01}^{(0)}P_{23}^{(0)},\\
\Pi_{1,2}&=P_{01}^{(1)}P_{23}^{(1)},\\
\Pi_{1,\perp}
&=P_{01}^{(0)}P_{23}^{(1)}
 +P_{01}^{(1)}P_{23}^{(0)}.
\end{aligned}
\end{equation}
Here $K_1=2$ counts two layer-wise joint patterns.

Let $P_{0123}^{(J)}$ be the four-qubit total-spin projector. The fixed-tree atoms are
\begin{equation}\label{eq:n4-joint-atomic-en}
P_{j_{01},j_{23},J}
=
P_{01}^{(j_{01})}P_{23}^{(j_{23})}P_{0123}^{(J)},
\end{equation}
with incompatible triples giving zero. They resolve
\begin{equation}
P_{0123}^{(J)}
=
\sum_{j_{01},j_{23}}P_{j_{01},j_{23},J}.
\end{equation}
Retaining $J=0$ and $J=1$ at layer two gives
\begin{equation}\label{eq:n4-layer-two-retained-en}
\Pi_{2,1}=P_{0123}^{(0)},
\qquad
\Pi_{2,2}=P_{0123}^{(1)},
\qquad
\Pi_{2,\perp}=P_{0123}^{(2)}.
\end{equation}
This $K_1=K_2=2$ example displays layer-wise counting, the complement, and cross-layer compatibility.

\subsection{Hierarchical Equivariant Parameterized Circuit}

The spin-sector projectors instantiate the shared-ancilla response of the general architecture. In the $SU(2)$ realization, layer $\ell$ is
\begin{equation}\label{eq:su2-controlled-layer-en}
U_\ell(\bm\theta_\ell)
=
\exp\!\left[
-\ii\sum_{s=1}^{K_\ell}
\Pi_{\ell,s}\otimes
H_{\ell,s}(\bm\theta_{\ell,s})
\right].
\end{equation}
Define $\widetilde\rho_R(g)=\rho_R(g)\otimes I_A$. Equation~\eqref{eq:su2-retained-projector-invariance-en} gives
\begin{equation}
\left[
\sum_{s=1}^{K_\ell}
\Pi_{\ell,s}\otimes H_{\ell,s},
\widetilde\rho_R(g)
\right]
=
\sum_{s=1}^{K_\ell}
[\Pi_{\ell,s},\rho_R(g)]\otimes H_{\ell,s}
=0.
\end{equation}
Consequently,
\begin{equation}\label{eq:su2-layer-equivariance-en}
[U_\ell(\bm\theta_\ell),\widetilde\rho_R(g)]=0,
\qquad
[U_{\bm\theta},\widetilde\rho_R(g)]=0,
\end{equation}
where $U_{\bm\theta}=U_L\cdots U_1$.

For an operator-level realization, illustrated in Fig.~\ref{fig:architecture-en}, the fixed-tree spin labels are coherently extracted into the temporary register. Membership in $\mathcal A_{\ell,s}$ controls the corresponding ancilla evolution, and the extraction is then reversed. The extraction unitary is built from invariant projectors, so it commutes with $\rho_R(g)\otimes I_T$. The temporary register returns to its reference state after each controlled operation and is omitted from Eq.~\eqref{eq:su2-controlled-layer-en}.

\begin{figure}[t]
  \centering
  \includegraphics[width=1\textwidth]{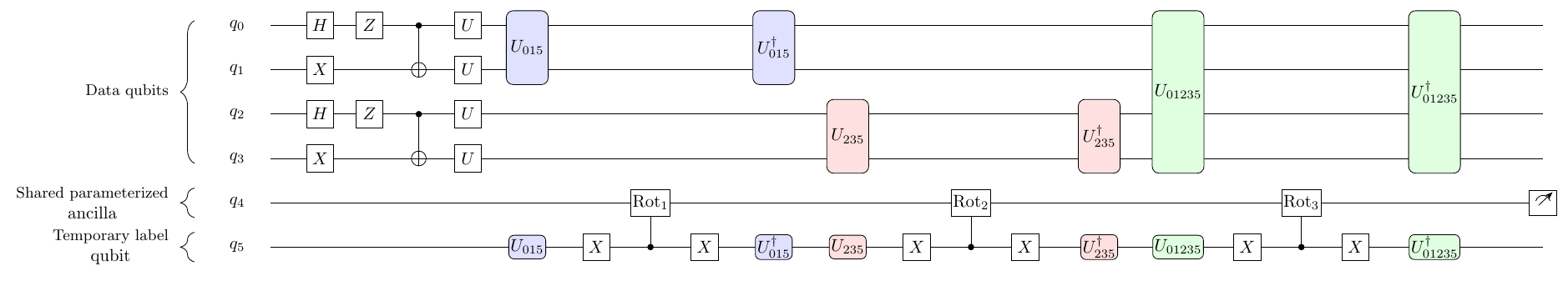}
  \caption{The $N=4$ $SU(2)$ realization. The equivariantly encoded data register supplies hierarchical spin-sector labels, the shared ancilla carries the parameterized response, and the temporary label register is uncomputed after each controlled operation.}
  \label{fig:architecture-en}
\end{figure}

Equation~\eqref{eq:su2-layer-equivariance-en} shows that the hierarchical controlled evolution is an equivariant parameterized circuit. Its noncommuting trainable response remains confined to the shared ancilla, as characterized by Theorem~\ref{thm:lie-containment-en}.

\subsection{Ancilla Readout and End-to-End Invariance}

After every temporary label register has been uncomputed, define the complete output state by
\begin{equation}\label{eq:complete-output-state-en}
|\Psi_{\bm\theta}(x)\rangle
=
U_{\bm\theta}
(U_E(x)\otimes I_A)
\left(
|\psi_0\rangle_D\otimes|0\rangle_A^{\otimes m}
\right).
\end{equation}
The ancilla carries the trivial representation. Combining Eqs.~\eqref{eq:su2-encoding-en}, \eqref{eq:initial-state-invariance-en}, and \eqref{eq:su2-layer-equivariance-en} gives
\begin{equation}\label{eq:complete-state-equivariance-en}
|\Psi_{\bm\theta}(g\cdot x)\rangle
=
\widetilde\rho_R(g)
|\Psi_{\bm\theta}(x)\rangle.
\end{equation}
Thus the complete encoded and parameterized state transforms equivariantly under a global rotation.

Let $M_A=M_A^\dagger$ be a general observable on the shared ancilla and set
\begin{equation}\label{eq:su2-ancilla-measurement-en}
M=I_D\otimes M_A.
\end{equation}
No spatial representation acts on $\mathcal H_A$, and hence
\begin{equation}
[M,\widetilde\rho_R(g)]=0.
\end{equation}
The scalar model output is
\begin{equation}\label{eq:su2-output-en}
f_{\bm\theta}(x)
=
\langle\Psi_{\bm\theta}(x)|
M
|\Psi_{\bm\theta}(x)\rangle.
\end{equation}
Using Eq.~\eqref{eq:complete-state-equivariance-en},
\begin{equation}
\begin{aligned}
f_{\bm\theta}(g\cdot x)
&=
\langle\Psi_{\bm\theta}(x)|
\widetilde\rho_R(g)^\dagger
M
\widetilde\rho_R(g)
|\Psi_{\bm\theta}(x)\rangle\\
&=f_{\bm\theta}(x).
\end{aligned}
\end{equation}
The encoding and hierarchical parameterized evolution therefore define a rotation-equivariant quantum state map. The ancilla-side scalar measurement converts that state map into a rotation-invariant model output.
\section{Numerical Simulations}
\label{sec:numerical-en}

The simulations examine three finite-size consequences of the construction: initialization-gradient statistics relative to generic and conventional equivariant circuits, the contribution of hierarchical sector control relative to a global ancilla baseline, and the capacity to fit representative invariant functions. Every experiment instantiates the general architecture with $SU(2)$ projectors and uses exact state-vector simulation. Hardware noise and finite-shot sampling are absent. Both effects can alter equivariance and trainability diagnostics \cite{tuysuz_symmetry_2024,yao_direct_2025}, so the comparisons below concern circuit structure under ideal simulation.

\subsection{Gradient Variance and Gradient Norm at Initialization}

We first examine the initialization-gradient scaling suggested by the Lie-algebraic restriction. Three variational families are compared: a generic parameterized circuit, a conventional rotationally equivariant circuit, and the proposed hierarchical ancilla-controlled architecture instantiated with $SU(2)$ sector projectors. The number of data qubits is chosen from
\[
N\in\{4,6,8,10,12\}.
\]
For each system size, we perform $30$ independent random parameter initializations and data samplings. All trainable parameters are initialized from the uniform distribution $\mathcal U(0,\pi)$. Let $\mathcal L(\bm\theta)$ be the loss function for one initialization and let
\begin{equation}
\nabla_{\bm\theta}\mathcal L
=
\left(
\frac{\partial\mathcal L}{\partial\theta_1},
\frac{\partial\mathcal L}{\partial\theta_2},
\ldots,
\frac{\partial\mathcal L}{\partial\theta_P}
\right)
\end{equation}
be the parameter-gradient vector, where $P$ is the number of trainable parameters. We evaluate two quantities. The first is the gradient variance over trainable parameters, which reflects whether individual parameter directions vanish as the system size increases. The second is the mean squared gradient norm,
\begin{equation}
\|\nabla_{\bm\theta}\mathcal L\|_2^2
=
\sum_{p=1}^{P}
\left(
\frac{\partial\mathcal L}{\partial\theta_p}
\right)^2,
\end{equation}
which measures the total optimizable signal across all parameter directions. If both quantities decay rapidly with $N$, the local loss landscape around random initialization becomes nearly flat, and gradient-based optimizers require increasingly precise gradient estimates.

Figure~\ref{fig:gradient-summary-en}(a) shows the gradient variance as a function of system size. The generic and conventional rotationally equivariant circuits both decay rapidly as $N$ increases, with the latter retaining a larger scale. The hierarchical controlled architecture exhibits slower decay over $N\in\{4,6,8,10,12\}$. Its mean squared gradient norm is also larger across the same range [Fig.~\ref{fig:gradient-summary-en}(b)]. These finite-size statistics agree with the gradient behavior expected from the restricted trainable algebra.

\begin{figure}[t]
  \centering
  \begin{minipage}{0.32\textwidth}
    \centering
    \includegraphics[width=\linewidth]{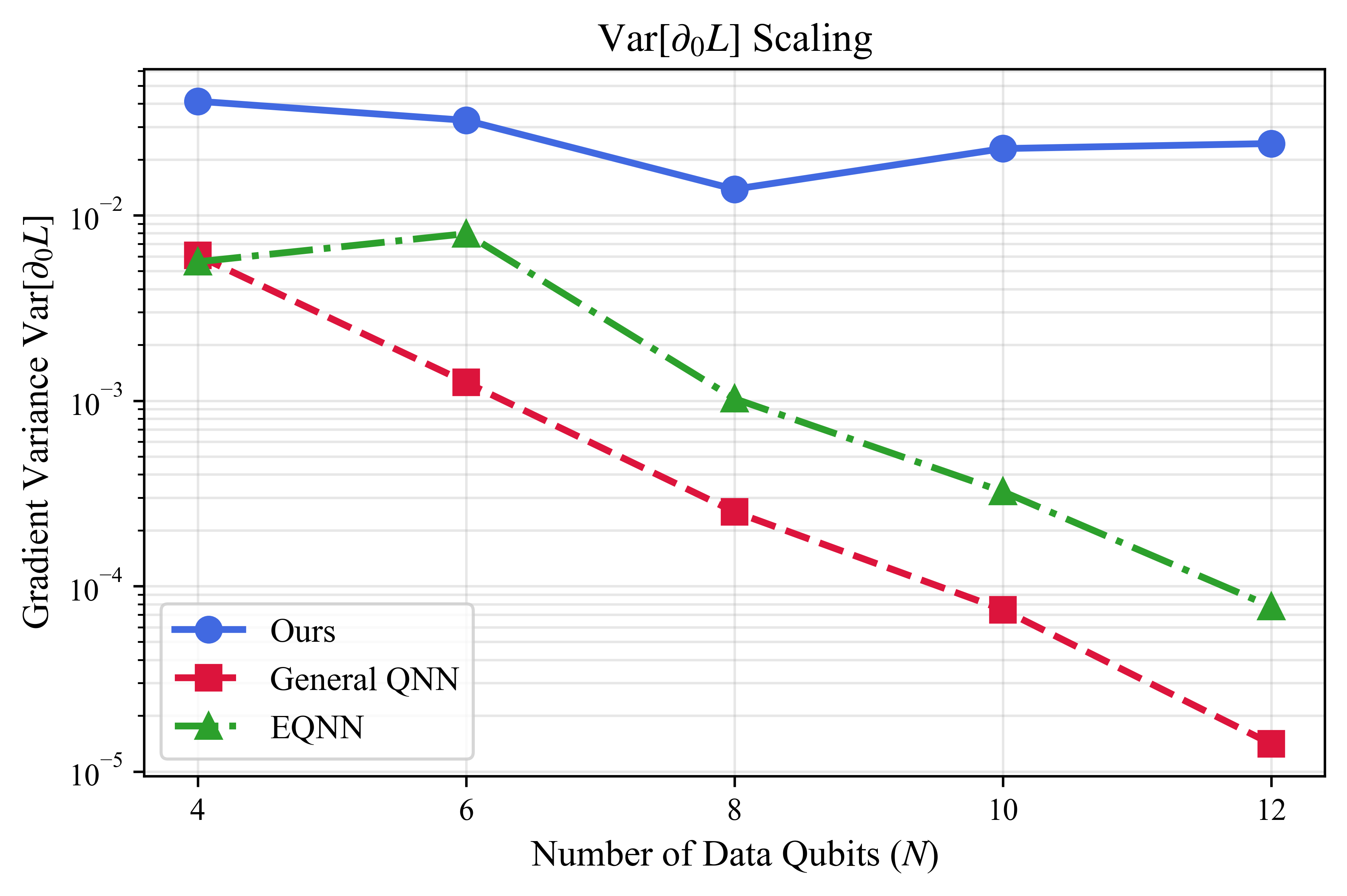}\\
    (a)
  \end{minipage}
  \begin{minipage}{0.32\textwidth}
    \centering
    \includegraphics[width=\linewidth]{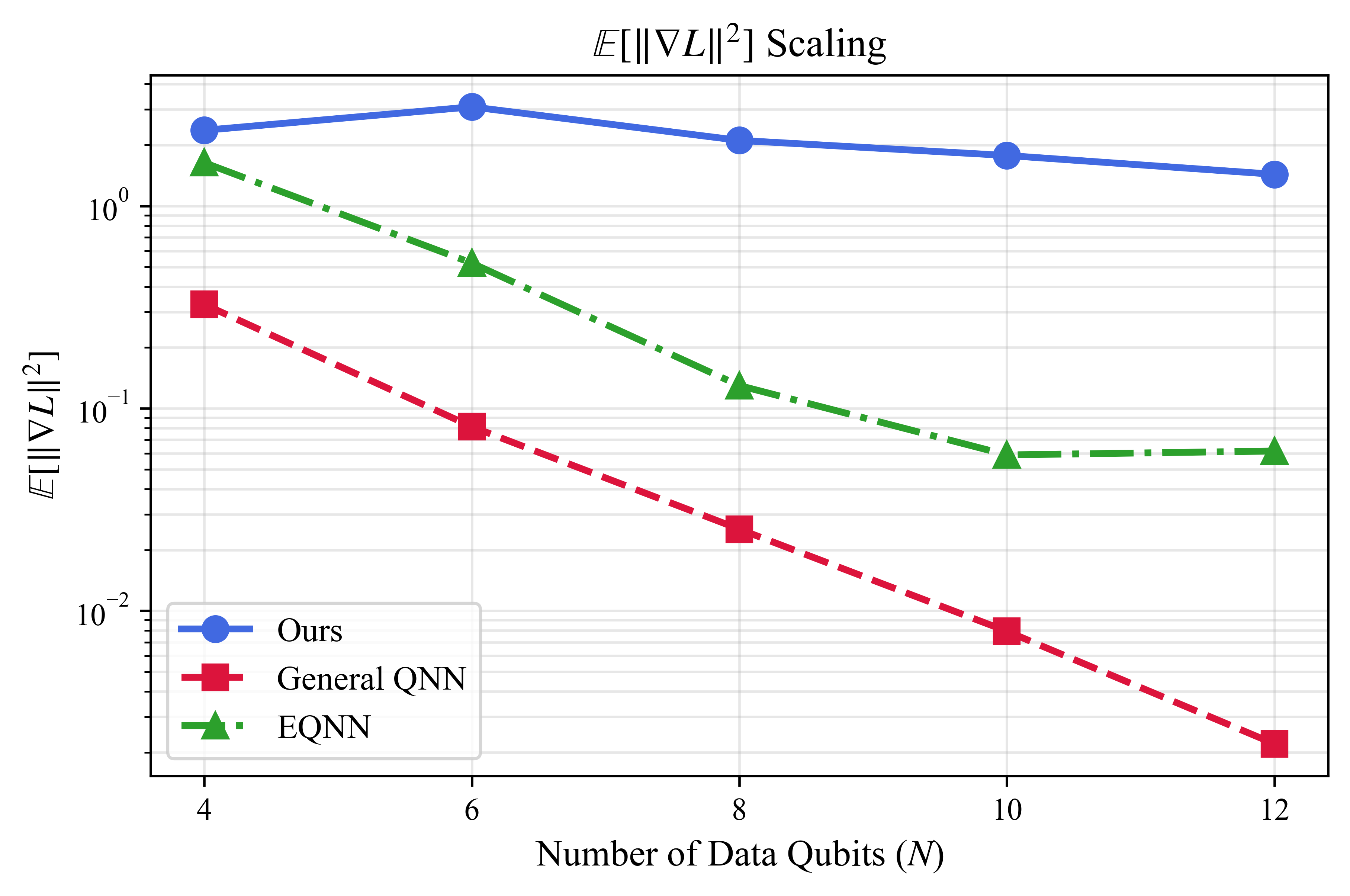}\\
    (b)
  \end{minipage}
  \begin{minipage}{0.32\textwidth}
    \centering
    \includegraphics[width=\linewidth]{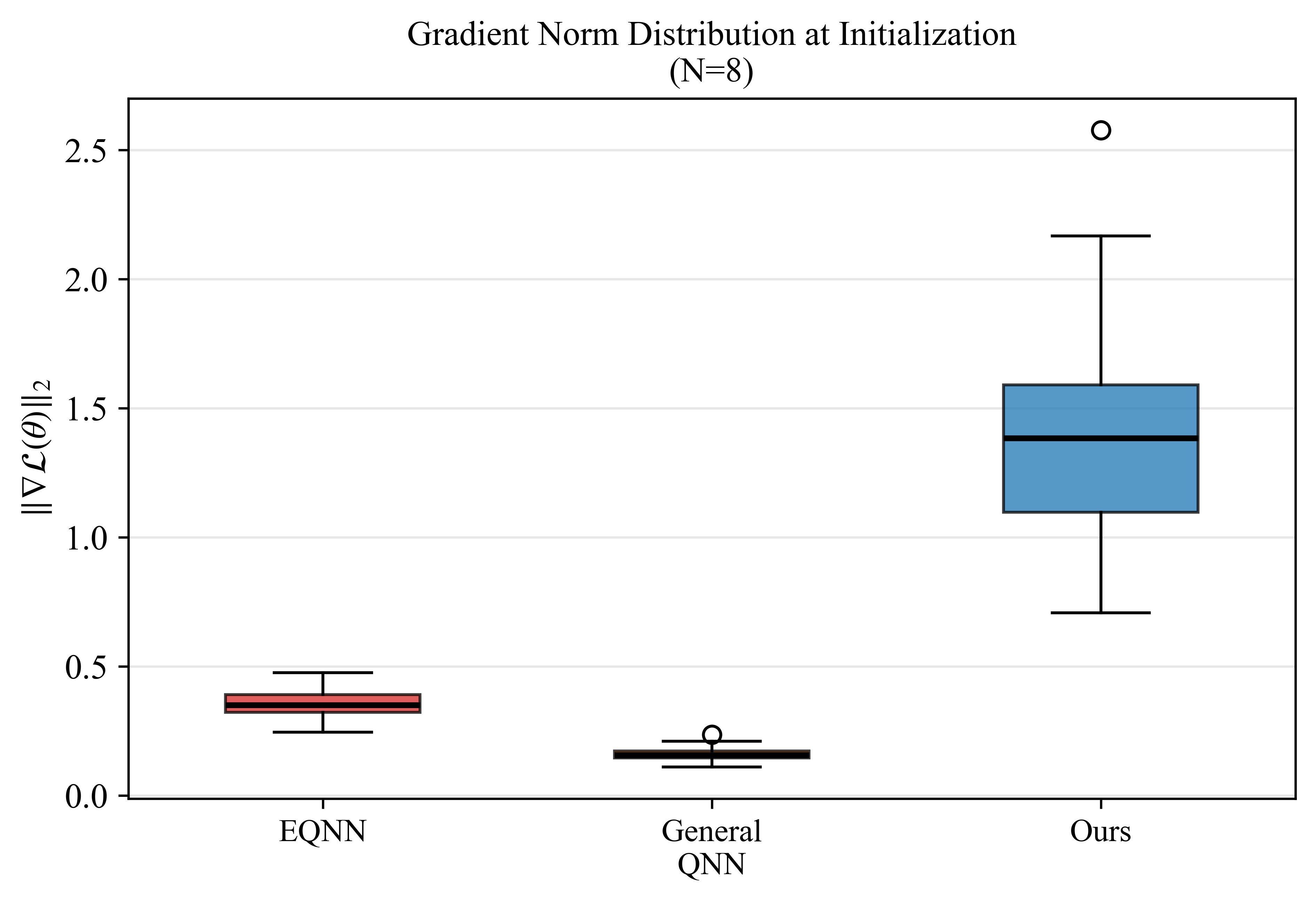}\\
    (c)
  \end{minipage}
  \caption{Initialization-gradient diagnostics. (a) Gradient variance as a function of data-qubit number. (b) Mean squared gradient norm. (c) Gradient-norm distribution at $N=8$.}
  \label{fig:gradient-summary-en}
\end{figure}

To inspect the initialization distribution, we fix $N=8$, perform $60$ random initializations for each model, and record the $L_2$ gradient norm. The box plot in Fig.~\ref{fig:gradient-summary-en}(c) separates a systematic shift from isolated large-gradient outliers. The hierarchical architecture has a higher central range, whereas the generic circuit places more samples near zero. The distribution supports the same finite-size trend as the variance and mean-squared-norm results.

\subsection{Ancilla Resource Ablation}

The ancilla ablation separates register size from hierarchical projection control for $m=1$ and $m=2$. Its baseline adds the same number of ancilla qubits to a generic parameterized model and uses global data-register control without the coupling-tree projectors.

Figure~\ref{fig:ancilla-ablation-en} shows that increasing $m$ lowers the gradient variance and mean squared norm in both models. Ancilla enlargement alone therefore does not account for the observed gradient scale. At each tested $m$, hierarchical projection control retains larger values than global control. The comparison isolates the $SU(2)$-structured control conditions as the distinguishing circuit feature in this ablation.

\begin{figure}[t]
  \centering
  \begin{minipage}{0.48\textwidth}
    \centering
    \includegraphics[width=\linewidth]{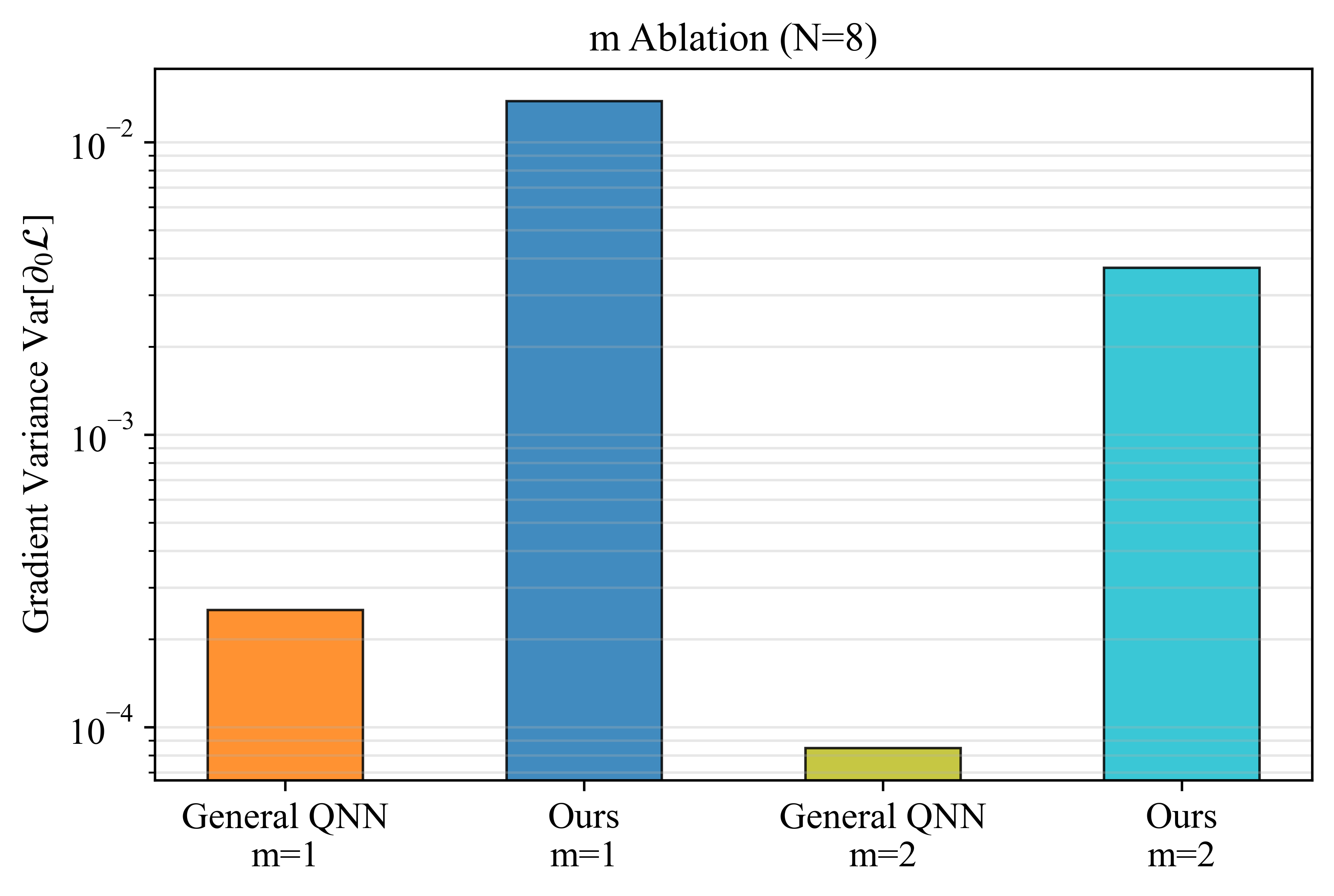}\\
    (a)
  \end{minipage}
  \begin{minipage}{0.48\textwidth}
    \centering
    \includegraphics[width=\linewidth]{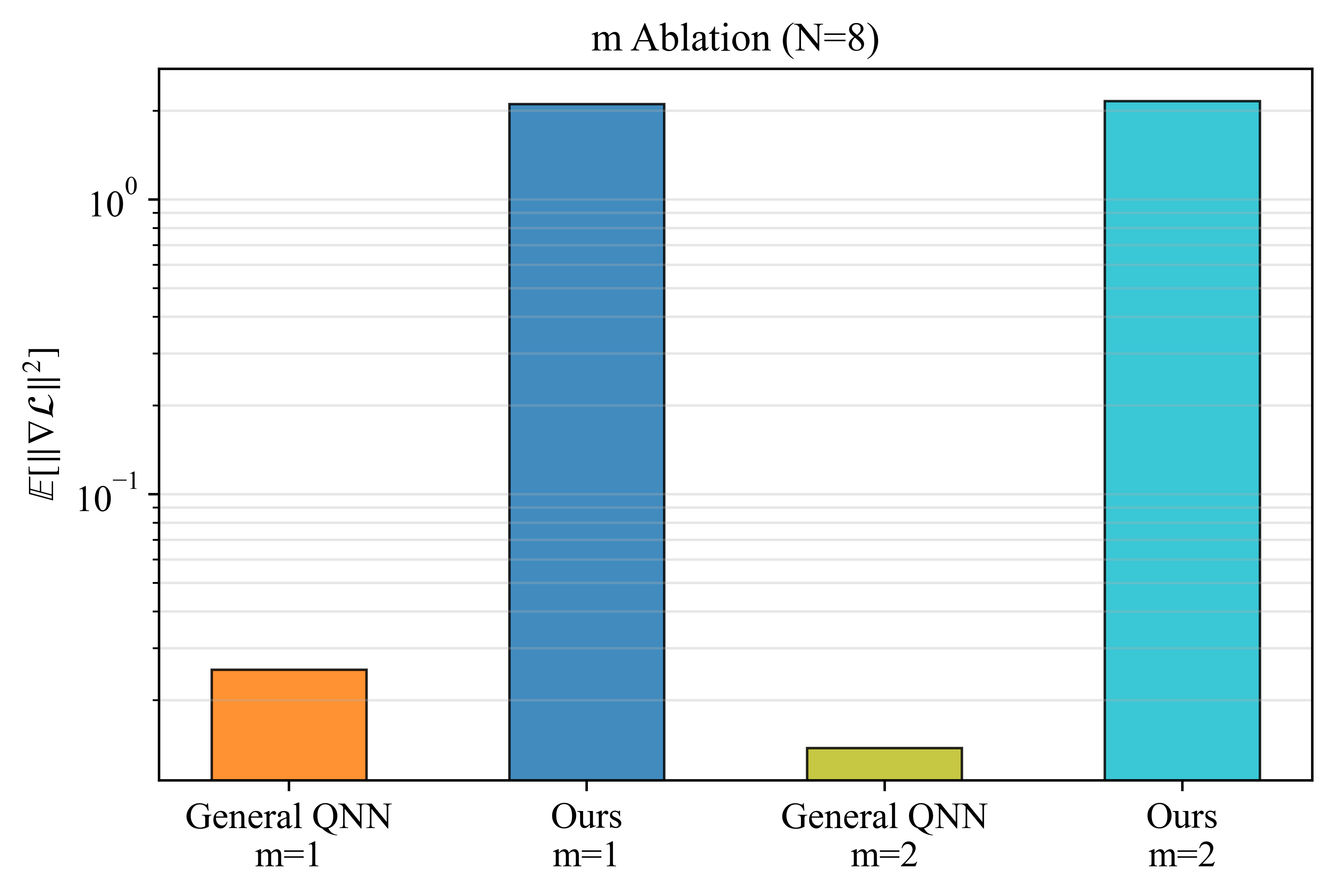}\\
    (b)
  \end{minipage}
  \caption{Ancilla-resource ablation. (a) Gradient variance for $m=1$ and $m=2$. (b) Mean squared gradient norm for $m=1$ and $m=2$.}
  \label{fig:ancilla-ablation-en}
\end{figure}

\subsection{Rotation-Invariant Learning Benchmarks}

The learning benchmarks examine whether projection pruning and ancilla restriction leave enough capacity for representative rotation-invariant maps. They provide finite-size tests of the $SU(2)$ realization on geometric and physical targets.

We first test it on a binary classification task using sampled sphere and torus point clouds. The model uses $N=4$ data qubits, $m=1$ shared ancilla qubit, and circuit depth $5$. The optimizer is Adam with learning rate $10^{-3}$. The output is an ancilla expectation value and the training loss is mean squared error. If labels are normalized to $\{-1,1\}$ and the model output satisfies $f_{\bm\theta}(x)\in[-1,1]$, the loss is
\begin{equation}
\mathcal L_{\mathrm{cls}}(\bm\theta)
=
\frac{1}{B}\sum_{b=1}^{B}
\left(f_{\bm\theta}(x_b)-y_b\right)^2,
\end{equation}
where $B$ is the batch size and $y_b$ is the binary label. The predicted class is determined by the sign of the output. Results averaged over $5$ random seeds show that the test accuracy quickly exceeds $0.9$ in the early training stage and stabilizes around $0.97$.

\begin{figure}[t]
  \centering
  \includegraphics[width=0.86\textwidth]{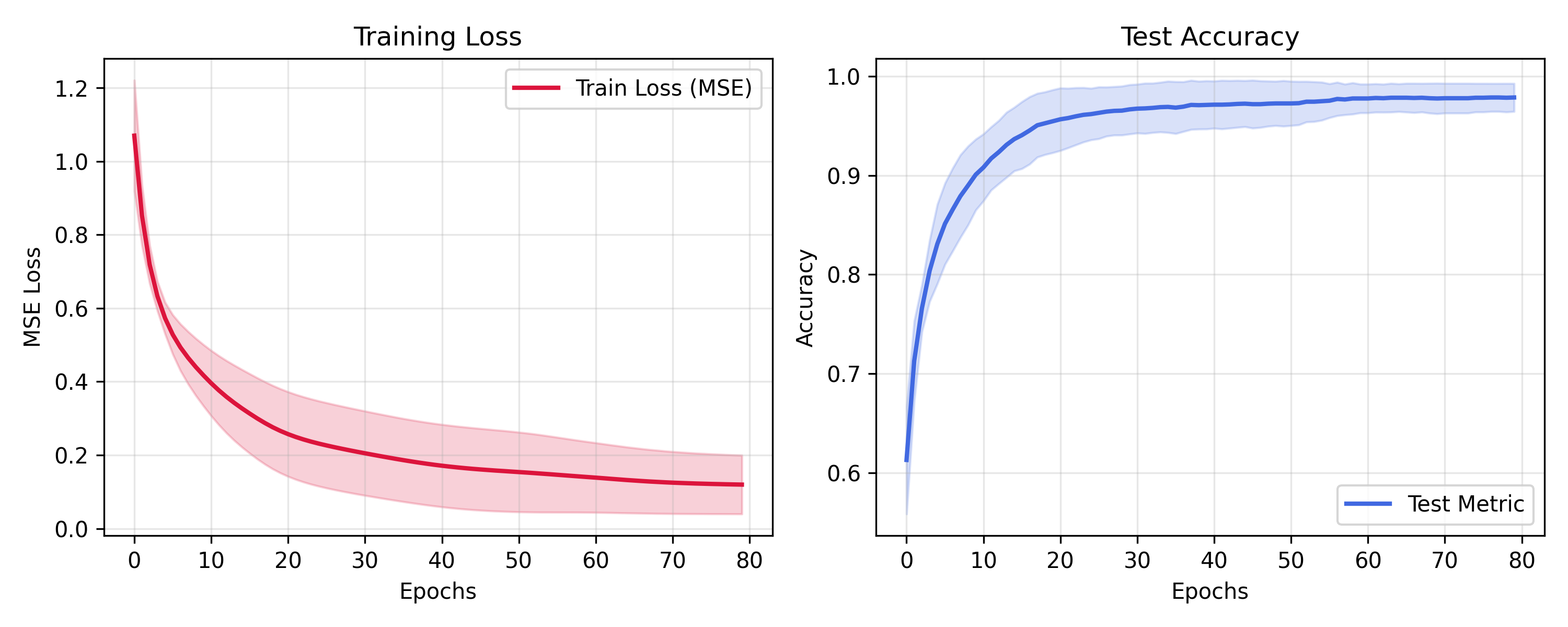}
  \caption{Training loss and test accuracy on the sphere-torus binary classification task.}
  \label{fig:sphere-torus-en}
\end{figure}

We further evaluate the method on ModelNet-5, a sparse point-cloud classification benchmark constructed from the official ModelNet40 HDF5 data \cite{wu_3dshapenets_2015}. Point-cloud learning is a standard setting where geometric priors and rotation-invariant or rotation-equivariant representations are important \cite{bronstein_geometric_2017,qi_pointnet_2017,sebastian_image_2025}. The dataset contains five classes: bottle, bowl, cup, lamp, and stool, corresponding to the original class indices $[0,8,17,20,35]$. Each object originally contains $2048$ points. To fit the current quantum simulation scale, we apply farthest point sampling and retain $4$ points per object. The training, validation, and test sets contain $3500$, $500$, and $1000$ samples, respectively.

The five-class task uses $5$ ancilla readouts as a feature vector followed by a two-layer MLP classifier. Here $m=5$ is fixed by the readout dimension and remains independent of the data-register size, consistent with the constant-$m$ scaling regime. The quantum circuit contains $380$ parameters and the classical head contains $181$, for $561$ parameters in total. A hardware-efficient ansatz (HEA) uses the same numbers of data and ancilla qubits with comparable quantum and classical parameter counts. This matching emphasizes the effect of circuit structure.

Figure~\ref{fig:modelnet-curve-en} shows the training curves over $3$ random seeds. The proposed architecture decreases the loss faster during early training and reaches a validation accuracy near $0.75$; the HEA baseline remains near $0.5$. On the test set [Fig.~\ref{fig:modelnet-test-en}], the respective accuracies are $75.6\pm0.8\%$ and $50.5\pm1.2\%$. Under this sparse four-point input and matched parameter scale, invariant-sector control is associated with faster optimization and higher classification accuracy.

\begin{figure}[t]
  \centering
  \includegraphics[width=0.86\textwidth]{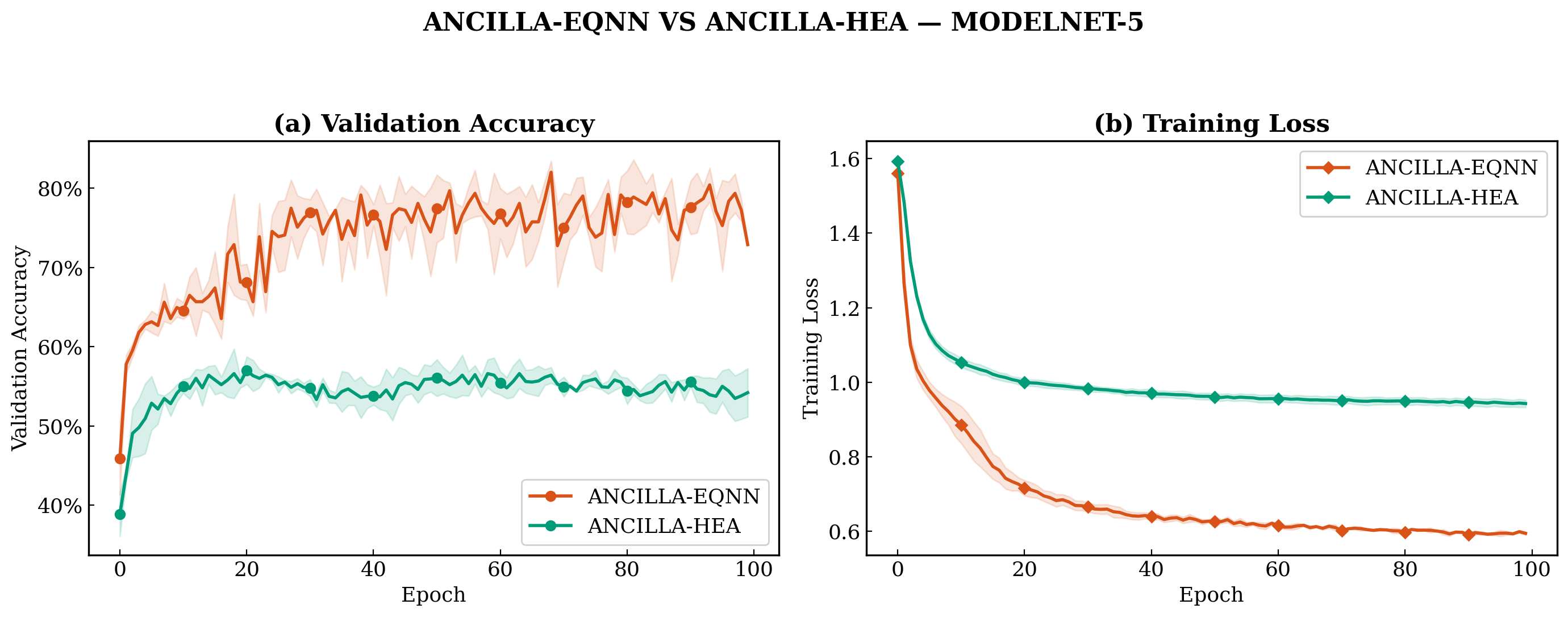}
  \caption{Training loss and classification accuracy on the ModelNet-5 dataset.}
  \label{fig:modelnet-curve-en}
\end{figure}

\begin{figure}[t]
  \centering
  \includegraphics[width=0.55\textwidth]{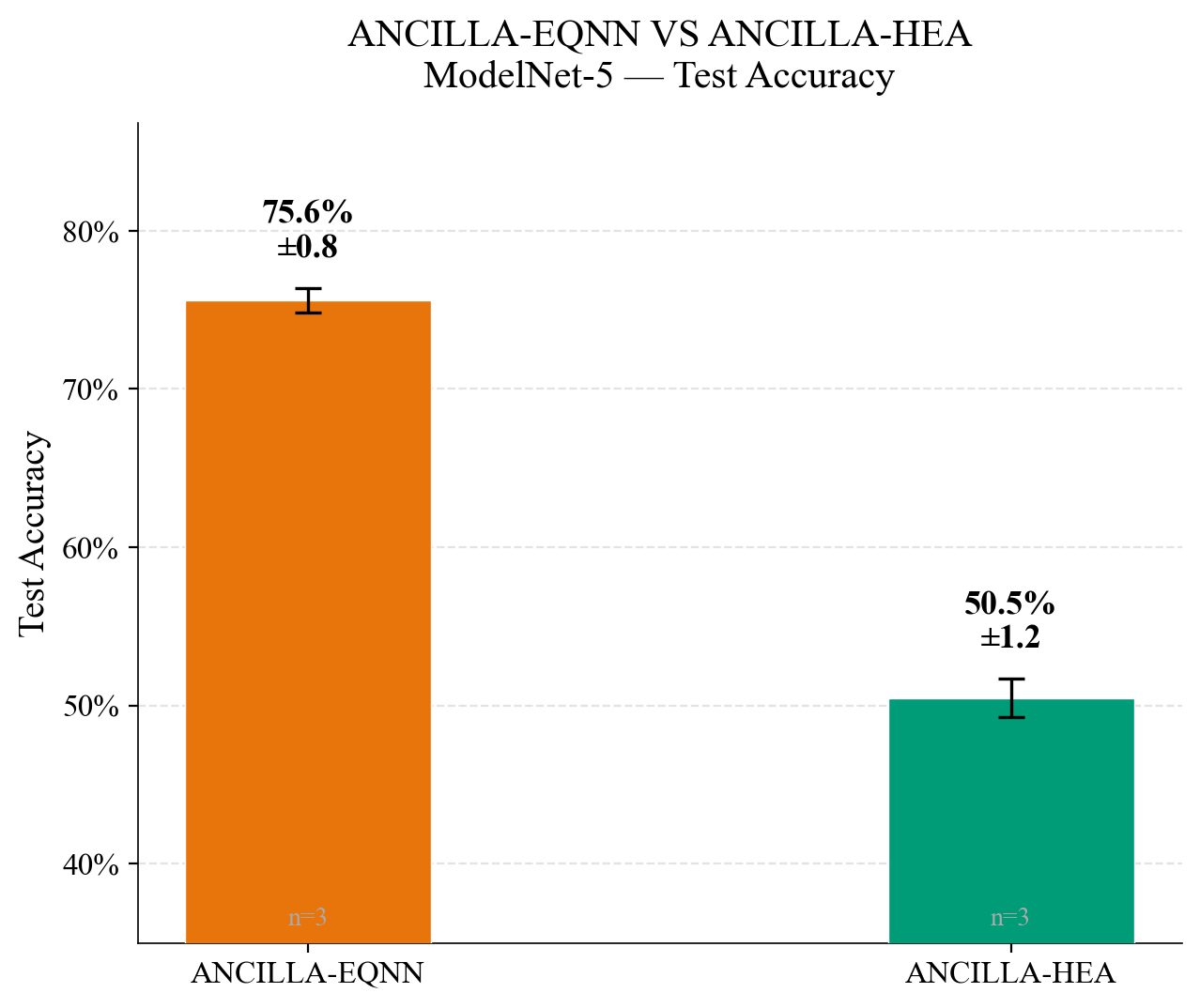}
  \caption{Test accuracy comparison on the ModelNet-5 dataset.}
  \label{fig:modelnet-test-en}
\end{figure}

\subsection{Heisenberg Ground-State Energy Regression}

Because the $SU(2)$ realization directly uses invariant subspaces and spin-coupling information, we next consider a physics-motivated regression problem. The task is to predict the ground-state energy of a geometry-dependent isotropic Heisenberg model. Each sample consists of $4$ spin positions randomly distributed in the unit ball. The coupling strength between sites $i$ and $j$ is
\begin{equation}
J_{ij}=\exp\left(-\lVert\bm r_i-\bm r_j\rVert_2\right).
\end{equation}
The corresponding isotropic Heisenberg Hamiltonian is
\begin{equation}
H(\bm r)
=
\sum_{i<j}J_{ij}
\left(
X_iX_j+Y_iY_j+Z_iZ_j
\right).
\end{equation}
This Hamiltonian is invariant under global spin rotations and therefore matches the inductive bias of the proposed architecture.

Labels are generated by exact diagonalization. For each spatial configuration, we construct the Hamiltonian matrix and take its lowest eigenvalue as the ground-state energy:
\begin{equation}
E_0(\bm r)=\lambda_{\min}(H(\bm r)).
\end{equation}
To match the finite range of ancilla measurement outputs, the energy labels are linearly normalized to $[-1,1]$ and denoted by $\widetilde E_0(\bm r)$. The regression loss is
\begin{equation}
\mathcal L_{\mathrm{reg}}(\bm\theta)
=
\frac{1}{B}\sum_{b=1}^{B}
\left(
f_{\bm\theta}(x_b)-\widetilde E_0(x_b)
\right)^2.
\end{equation}
The dataset contains $600$ samples, with $480$ for training and $120$ for testing. The model uses $N=4$ data qubits, $m=1$ shared ancilla qubit, and circuit depth $4$. We use Adam with learning rate $2\times10^{-2}$.

Figure~\ref{fig:heisenberg-en} shows the training curves averaged over $5$ random seeds. The training MSE stabilizes near $0.04$, and the test MSE reaches approximately $4\times10^{-2}$. The close curves show no marked train--test separation at this data scale. The result extends the finite-size learning evidence from classification to a continuous rotationally invariant spectral target.

\begin{figure}[t]
  \centering
  \includegraphics[width=0.86\textwidth]{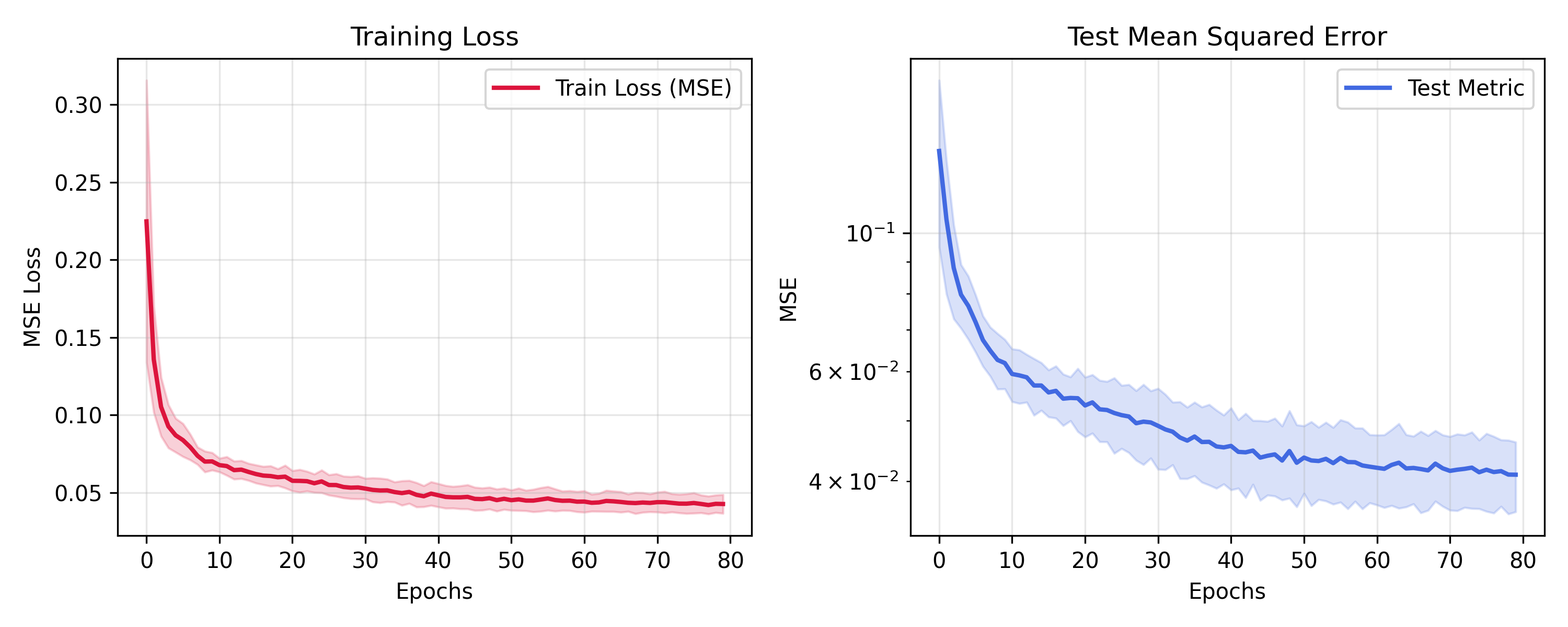}
  \caption{Training loss and test MSE for Heisenberg ground-state energy prediction.}
  \label{fig:heisenberg-en}
\end{figure}

\subsection{Interpretation of the Finite-Size Results}

The simulations connect the algebraic restriction to observable behavior under exact finite-size state-vector access. The gradient experiments compare initialization statistics across circuit families, while the point-cloud and Heisenberg tasks test two types of invariant targets. The latter has a direct physical correspondence: geometry changes the couplings of an $SU(2)$-invariant Hamiltonian, and the label is its lowest eigenvalue. Together, the experiments form an $SU(2)$ case study of the general architecture. Other groups require their own commuting sector decompositions and label-extraction circuits.

Natural extensions include enumerating the realized Lie algebra for small $N$, comparing it with Eq.~\eqref{eq:dim-bound-en}, and studying additional spin sizes, coupling graphs, and observables. Hardware-oriented evaluations can incorporate synthesized label extraction, finite-shot estimation, and noise channels. The relevant structural quantities are the retained-mode count $K_\ell$, ancilla size $m$, nonzero joint projectors $P_{\bm a}$, and the gradient or quantum Fisher information rank.

\section{Conclusion}

We introduced a hierarchical ancilla-controlled architecture in which commuting invariant-sector projectors select parameterized operations on a shared ancilla register. The data register retains the symmetry representation, while the non-Abelian trainable dynamics remains ancilla local. Coherent label extraction and uncomputation provide an operator-level realization; separate bounds describe the variational parameter count and the symbolic cost of label extraction, predicates, and controlled ancilla evolution.

The global circuit decomposition shows that each compatible joint sector follows an ordered ancilla-response path and that the scalar output is a weighted sum of these path responses. The number of realized paths, the ancilla dimension, and the available ancilla generators jointly regulate capacity, with parameter sharing connecting paths that carry the same layer labels. The group-independent containment theorem places the dynamical Lie algebra inside a direct sum of ancilla Lie-algebra blocks indexed by the same sectors. Constant ancilla size and a constant retained-mode count yield polynomial dimension growth along a logarithmic-depth hierarchy.

Nested particle-number and parity projectors demonstrate that the algebraic construction extends beyond rotations. For $SU(2)$, a fixed Clebsch--Gordan tree recursively supplies compatible atomic sectors and produces a rotation-equivariant circuit with rotation-invariant scalar readout.

Across the studied finite sizes, this realization shows slower initialization-gradient decay than the generic and conventional rotationally equivariant baselines. It also fits sparse rotation-invariant classification targets and geometry-dependent Heisenberg ground-state energies. These results support trainable Lie-algebra restriction as a structural strategy for initialization trainability. Adaptive sector selection, realized-dimension enumeration, and elementary-gate synthesis of coherent label extraction are direct directions for further study.

\begin{acknowledgements}
This work was supported by the National Natural Science Foundation of China under Grant No.~62271265.
\end{acknowledgements}

\section*{Data Availability}

The ModelNet40 dataset used in this work is publicly available from its original source. The processed data, simulation results, and code used to generate the numerical results are available from the corresponding author upon reasonable request.

\bibliographystyle{quantum}
\bibliography{reference/Ref_Article,reference/Ref_Conference,reference/Ref_Book}

\end{document}